\def\shownotes{1}       
\newcommand{\authnote}[2]{{ $\ll$\textsf{\footnotesize #1 notes: #2}$\gg$}}
\newcommand{\authnote}[2]{}
\providecommand{\vs}{vs. }
\providecommand{\ie}{\emph{i.e.,} }
\providecommand{\eg}{\emph{e.g.,} }
\providecommand{\myparab}[1]{\vspace{1pt}\noindent\textbf{#1} }
\def\expandafter\normalsize\expandafter{%
    \normalsize%
    \setlength\abovedisplayskip{1pt}%
    \setlength\belowdisplayskip{2pt}%
    \setlength\abovedisplayshortskip{-2pt}%
    \setlength\belowdisplayshortskip{2pt}%
}
\providecommand{\sysname}{\textsc{Symphony}\xspace}
\providecommand{\sysnameslice}{\textsc{Symphony}\xspace}
\providecommand{\sysnamete}{\textsc{Symphony}\xspace}
\providecommand{\meta}{PWAN}
\providecommand{\blastshield}{\textsc{BlastShield}\xspace}
\providecommand{\simplex}{\textsc{LP-Simplex}\xspace}
\providecommand{\barrier}{\textsc{LP-Barrier}\xspace}
\providecommand{\scratch}{\textsc{LP-Reserved}\xspace}
\providecommand{\dote}{\textsc{DOTE}\xspace}
\providecommand{\oracle}{\textsc{Oracle}\xspace}
\newtheorem{proposition}{Proposition} 
\newtheorem{lemma}{Lemma} 
\newtheorem{theorem}{Theorem} 
\newtheorem{definition}{Definition}
\newlist{compactitem}{itemize}{1}
\setlist[compactitem,1]{label=\textbullet, left=0pt, itemsep=1pt, topsep=1pt, parsep=0pt, partopsep=0pt}
\titlespacing\section{0pt}{12pt plus 2pt minus 2pt}{2pt plus 2pt minus 2pt}
\titlespacing\subsection{0pt}{12pt plus 2pt minus 2pt}{2pt plus 2pt minus 2pt}
\def\expandafter\normalsize\expandafter{%
    \normalsize%
    \setlength\abovedisplayskip{-1pt}%
    \setlength\belowdisplayskip{-1pt}%
    \setlength\abovedisplayshortskip{-2pt}%
    \setlength\belowdisplayshortskip{2pt}%
}
\begin{document}
\date{}
\setlength{\droptitle}{-4em}   

\title{Stable and Fault-Tolerant Decentralized Traffic Engineering}
\author{
{\rm Arjun Devraj}\\
Cornell University
\and
{\rm Umesh Krishnaswamy}\\
Microsoft
\and
{\rm Ying Zhang}\\
Meta
\and 
{\rm Karuna Grewal}\\
Cornell University
\and
{\rm Justin Hsu}\\
Cornell University
\and
{\rm \'{E}va Tardos}\\
Cornell University
\and
{\rm Rachee Singh}\\
Cornell University
}
\maketitle

\begin{abstract}
Cloud providers have recently decentralized their wide-area network traffic engineering (TE) systems to contain the impact of TE controller failures. In the decentralized design, a controller fault only impacts its slice of the network, limiting the blast radius to a fraction of the network. However, we find that autonomous slice controllers can arrive at divergent traffic allocations that overload links by 30\% beyond their capacity. We present \sysname, a decentralized TE system that addresses the challenge of divergence-induced congestion while preserving the fault-isolation benefits of decentralization. By augmenting TE objectives with quadratic regularization, \sysname makes traffic allocations robust to demand perturbations, ensuring TE controllers naturally converge to compatible allocations without coordination. In parallel, \sysname's randomized slicing algorithm partitions the network to minimize blast radius by distributing critical traffic sources across slices, preventing any single failure from becoming catastrophic. These innovations work in tandem: regularization ensures algorithmic stability to traffic allocations while intelligent slicing provides architectural resilience in the network. Through extensive evaluation on cloud provider WANs, we show \sysname reduces divergence-induced congestion by $14\times$ and blast radius by 79\% compared to current practice.\footnote{Correspondence to: \href{mailto:adevraj@cs.cornell.edu}{adevraj@cs.cornell.edu}}
\end{abstract}

\section{Introduction}
The history of distributed systems teaches us a fundamental lesson: centralized control achieves optimal performance but creates unacceptable risk. Nowhere is this tension more acute than in planet-scale wide-area networks (WANs) that interconnect datacenters in the world's largest cloud deployments. For over a decade, cloud providers have relied on centralized software-defined traffic engineering (TE) to optimally route exabytes of traffic across continents~\cite{swan,b4,b4after}. However, this architectural choice harbors a massive risk: a single software controller fault can cascade into a planet-scale outage, taking down services for hundreds of millions of users~\cite{blastshield}.

\begin{figure}[t]
  \centering
    \begin{subfigure}[b]{0.24\textwidth}
    \includegraphics[width=\textwidth]{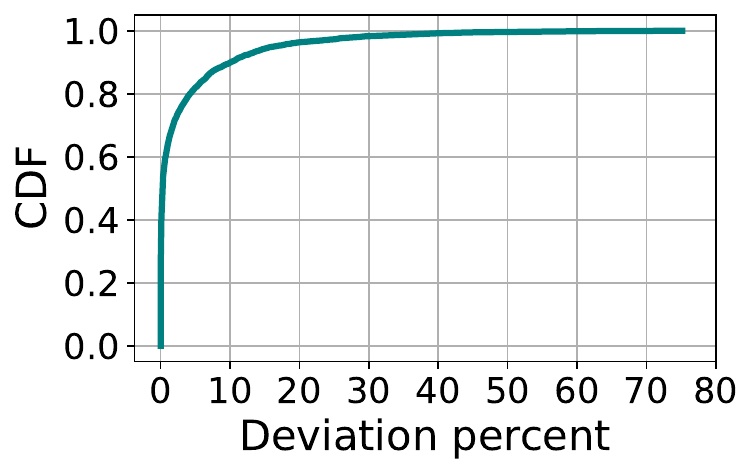}
    \caption{\small{Measured divergence.}}
    \label{fig:divergence}
  \end{subfigure}  
    \begin{subfigure}[b]{0.23\textwidth}
    \includegraphics[width=\textwidth]{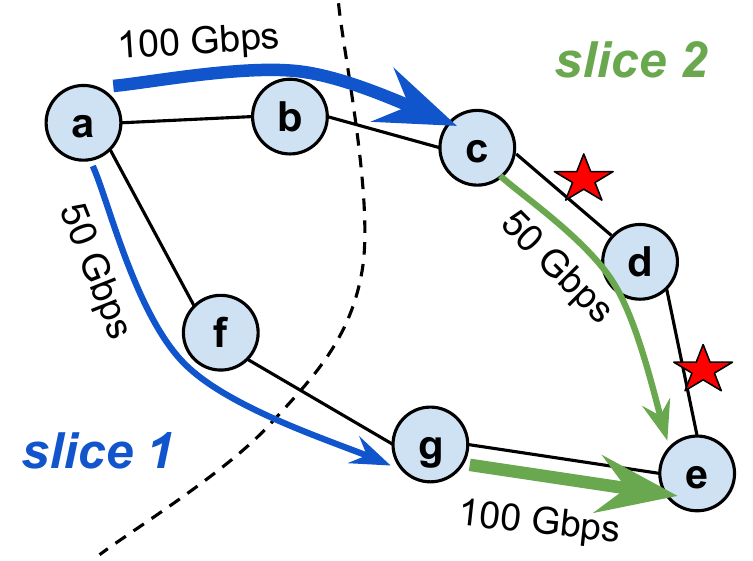}
    \caption{\small{Divergence example.}}
    \label{fig:slice-routing}
  \end{subfigure} 
  \vspace{1mm}
  \caption{\ref{fig:divergence} shows divergence in flow allocations between 2 pairs of slice controllers in a large commercial cloud WAN.~\ref{fig:slice-routing} shows an example of divergence in a network with 2 slices for demand $a\rightarrow e$. Due to differences in demands predicted by different controllers, slice 1's controller allocated 100 Gbps of traffic to the $abcde$ path and 50 Gbps to the $afge$ path. In contrast, slice 2's controller allocated 50 Gbps to $abcde$ and 100 Gbps to $afge$. Since each controller programs allocations in its slice, router $c$ receives 100 Gbps of traffic when it had only allocated 50 Gbps along link $cd$, causing congestion on starred links.}
\end{figure}

This realization has motivated an architectural change. Major cloud providers are increasingly \emph{decentralizing} their TE systems by breaking their WANs into autonomous slices, each governed by an independent controller~\cite{blastshield,onewan,meta-ebb}. In a decentralized WAN, a controller fault only impacts its slice, containing the blast radius to a fraction of the network. To achieve true fault isolation, each controller must operate independently, deriving its own view of network topology and traffic demands from independent measurements, without any coordination with other slices~\cite{onewan}.

However, decentralization introduces a new challenge. When slice controllers independently predict traffic demands from noisy switch counters, they inevitably arrive at slightly different demand matrices. These seemingly innocuous perturbations trigger a cascade of problems. Linear programs, the workhorses of modern TE, are notoriously sensitive to input variations. Small differences in predicted demands can cause controllers to compute vastly different routing decisions. Figure~\ref{fig:divergence} shows this challenge in a large production cloud WAN: in over 10\% of time windows, the bandwidth allocated by different slice controllers to the \emph{same path} diverges by more than 10\%. When upstream routers in one slice send traffic according to their controller's allocation, but downstream routers in another slice provision bandwidth according to a different allocation, inter-datacenter links become overloaded, resulting in congestion (Figure~\ref{fig:slice-routing}). Our findings show this \emph{divergence-induced congestion} can exceed link capacities by over 30\%, threatening the very reliability that decentralization was meant to achieve.

The obvious solutions fail to address this challenge. Eliminating noise in demand prediction is infeasible since it requires aggregating counters from thousands of distributed routers under tight time constraints. Further, forcing controllers to achieve consensus on demands undermines the autonomy that makes slices fault-tolerant by introducing the risk of globally propagating faults and errors in demand prediction. Finally, while operators could reserve 30\% spare capacity on every link to absorb divergence effects, this is both prohibitively expensive and scales poorly with network size. 
As a result, production TE systems face a stark choice: abandon decentralization or accept unpredictable congestion.

In this paper, we introduce \sysname, a novel approach to decentralized TE that reconciles the tradeoff between fault isolation and stable flow allocations. Our approach draws inspiration from an unexpected source: regularization in machine learning. Just as L2 regularization stabilizes neural networks from overfitting to noisy training data, we show that adding a quadratic regularizer to TE objectives makes routing decisions more robust to demand perturbations. However, regularization transforms TE into a quadratic program, threatening the scalability of TE formulations. We overcome this challenge through a key insight: not all network paths are equally susceptible to divergence. By selectively regularizing only the critical paths, \sysname maintains a similar computational complexity as traditional TE while scaling to the largest known WAN topologies. This paper makes three contributions that fundamentally reshape WAN traffic engineering:

\begin{compactitem}
    \item \textbf{Regularized TE:} We introduce a novel approach to TE that uses regularization to stabilize flow allocations in decentralized WANs. Through extensive evaluation on production cloud WANs, we demonstrate that \sysname reduces divergence-induced congestion by 14$\times$, while preserving the optimality of the original TE objective \emph{and} without requiring any coordination between slice controllers. 

    \item \textbf{Provable guarantees:} Our theoretical analysis demonstrates that \sysname provably reduces divergence in flow allocations regardless of the TE objective or slicing strategy.

    \item \textbf{Randomized slicing algorithm:} Finally, we show that naive WAN slicing strategies in decentralized TE can undermine fault isolation by concentrating high-volume traffic sources within the same slice. Thus, we develop a novel randomized algorithm that constructs network slices to minimize the blast radius by design. Our approach reduces the blast radius by 79\% compared to current practice while maintaining the stability benefits of regularization in TE.
    
\end{compactitem} 

Together, these contributions help mitigate the tension between divergence and fault isolation in planet-scale cloud traffic engineering systems. Moreover, \sysname{'s} implications extend beyond decentralized WANs, as even centralized TE systems struggle with demand prediction errors~\cite{dote}. While prior work tackles this challenge with deep learning~\cite{dote,valadarsky2017learning,teal}, these approaches remain under-deployed due to poor generalization and limited interpretability. In contrast, \sysname offers a theoretically grounded, immediately deployable solution to reduce sensitivity to demand prediction error, while also directly optimizing for operationally relevant congestion metrics.

\section{Divergence in Decentralized Network TE}
\label{sec:motivation}

Cloud WANs form the backbone of global digital infrastructure, carrying exabytes of traffic between datacenters distributed across the world. To manage this scale, cloud providers have relied on centralized software-defined traffic engineering (TE) systems to compute efficient routing decisions~\cite{swan,b4,te-survey}.

Centralized software-defined TE employs a single global controller that maintains complete knowledge of network state. In cloud WANs, the controller periodically collects two inputs: the network topology (\ie physical connectivity and link capacities) and the traffic demand matrix (\ie pairwise traffic volumes between datacenters). These inputs are then used in TE to compute optimal flow allocations along network paths, a task that requires solving a multi-commodity flow problem~\cite{te-survey}. In practice, operators formulate TE as a linear program (LP)~\cite{b4,swan} and solve it using commercial solvers such as Gurobi or CPLEX~\cite{gurobi,glop,cplex,glpk}.

Centralized TE enables operators to compute globally optimal flow allocations~\cite{swan,b4}, and solver-guided implementations of TE allow operators to encode diverse objectives tailored to their needs, such as maximizing total network throughput (MT), minimizing the maximum link utilization (MLU) to prevent congestion, or maximizing concurrent flow (MCF) to ensure fairness among competing demands~\cite{te-survey}. Every 5--10 minutes, the controller recomputes traffic allocations by solving the linear program with the updated inputs and reprograms forwarding tables across hundreds of routers, maintaining optimal network utilization even as traffic demands change~\cite{b4after,swan}.

However, centralized TE has a global blast radius: a single controller fault can disrupt the entire global network. Such failures have occurred in production cloud WANs, taking down services for hundreds of millions of users and prompting cloud providers to reconsider this architecture~\cite{blastshield}.

\begin{figure}
    \centering
   \includegraphics[width=0.99\linewidth]{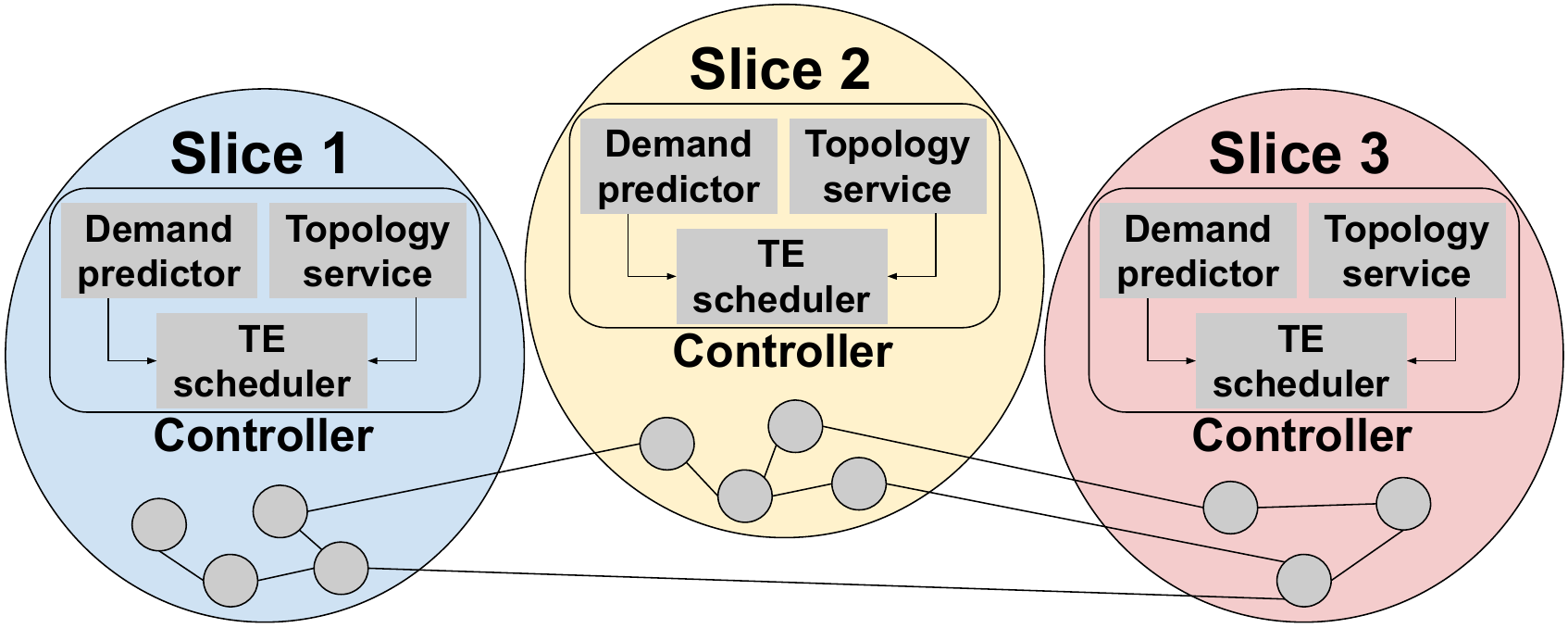}
   \vspace{0.5em}
    \caption{Architecture of decentralized TE systems.}     
    \label{fig:decentralized-te}
\end{figure}

\subsection{Architecture of Decentralized TE}
In response, cloud providers have implemented decentralized TE systems to limit fault propagation~\cite{blastshield,onewan,meta-ebb,dsdn}. Decentralized TE architectures partition the WAN into autonomous \emph{slices}---subgraphs of the network topology, each managed by an independent controller (Figure~\ref{fig:decentralized-te})~\cite{blastshield,onewan,meta-ebb}. When a controller experiences a fault, only the routers within its slice are affected, limiting the blast radius to slice boundaries. Complete autonomy between slices is critical for this guarantee, so controllers cannot share state or coordinate decisions. Each slice controller must independently:

\begin{compactitem}
\item \textbf{Build network topology} through its own \emph{topology service}, which polls routers to discover link states and then constructs the global network graph.
\item \textbf{Predict traffic demands} using its own \emph{demand predictor} service, which collects packet counters from switches and uses modeling techniques to estimate future traffic matrices.
\item \textbf{Solve the TE optimization} using its locally derived inputs to compute flow allocations on the global network.
\item \textbf{Program the forwarding state} in routers, either within its own slice (slice routing)~\cite{blastshield,onewan} or at traffic sources using segment (\ie source) routing protocols~\cite{dsdn}.
\end{compactitem}

In theory, this design is compelling: if all controllers observe identical network conditions (\ie topology, demands) and solve identical traffic allocation problems, they should compute identical routing decisions. Subsequently, each slice controller will program the forwarding state of routers (or traffic sources, with segment routing) in its own slice, but the network would behave identically whether controlled by a single global controller or multiple autonomous slice controllers. In this idealistic scenario, decentralization would provide fault isolation without sacrificing optimal routing~\cite{blastshield}.

\subsection{The Reality of Distributed Demand Prediction}
However, the distributed measurement of TE inputs in planet-scale networks is inherently noisy. Each slice controller must independently predict traffic demands for the next time window. A controller, using its demand predictor service (Figure~\ref{fig:decentralized-te}), must poll hundreds of routers, each maintaining millions of per-flow counters. These counters track packets traversing the network at line rates approaching terabits per second. The collection process faces numerous sources of error:

\begin{compactitem}
\item \myparab{Temporal skew.} Flow counter snapshots arrive at slice controllers with different delays due to variable network latency and processing time. Even with protocols like NTP, clock differences across routers contribute to the skew.
\item \myparab{Sampling artifacts.} Under load, routers may drop counter updates or employ sampling that introduces statistical noise.
\item \myparab{Prediction noise.} Forecasting models that extrapolate future demands from historical data can amplify input errors.
\end{compactitem}

\begin{figure}[t]
    \centering
    \includegraphics[width=0.6\linewidth]{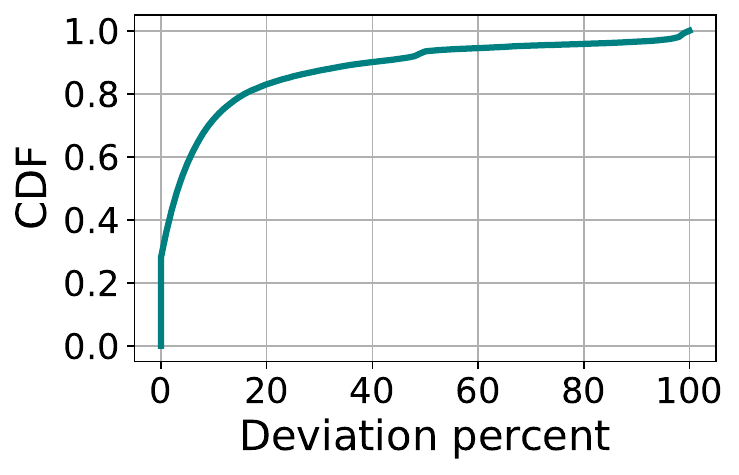}
    \caption{\small{Differences in demand inputs across 6 slice-controller pairs.}}
    \label{fig:perturbation}
\end{figure}

As a result, even when slice controllers measure the same network during the same time window, they inevitably derive slightly different demand matrices. Figure~\ref{fig:perturbation} quantifies these effects using measurements from a large commercial cloud WAN. Our data reveals that demand matrices computed by different slice controllers for the \emph{same time window} differ by over 10\% for 25\% of all source-destination pairs. These are not software bugs or configuration errors, but rather, represent the inherent limits of measurement accuracy in networks operating at planetary scale and petabit speeds.

\subsection{The Consequences: Divergence-Induced Congestion}
The impact of measurement noise is amplified by TE's formulation as a linear program~\cite{b4,b4after,swan,blastshield,onewan,dsdn}. While LPs are computationally efficient and guarantee optimal solutions, they are sensitive to input perturbations. This sensitivity is not an artifact of a specific implementation, but a fundamental property of linear optimization. When the optimal solution lies at the intersection of multiple constraints---a common occurrence in highly utilized networks---small changes in the objective coefficients or constraint bounds can cause the optimal solution vector to jump to an entirely different vertex of the feasible region. Thus, even minute differences in demand matrices used by different slice controllers can lead to drastically different flow allocations.

Figure~\ref{fig:divergence} demonstrates this phenomenon empirically. Slice controllers, starting with slightly different demand matrices, compute routing allocations that diverge by almost 80\%. As a result, Controller 1 might route 100 Gbps through the northern transcontinental path while Controller 2 routes the same traffic through the southern path. Both solutions are mathematically optimal for their respective inputs, yet they are fundamentally incompatible when implemented simultaneously, as is done in state-of-the-art decentralized TE deployments~\cite{blastshield,onewan}. Further, the unpredictability of divergence means that whether a path experiences divergence depends on the complex interaction between demand perturbations, network topology, and the current operating point of the optimization.

When divergent traffic allocations computed by different slice controllers manifest in the data plane, they create network congestion due to incompatible routing decisions. We call this \emph{divergence-induced congestion}, a problem that occurs regardless of the forwarding mechanism employed by the network (\eg segment routing~\cite{dsdn} or slice routing~\cite{blastshield,onewan}). Our findings show that divergence-induced congestion can \emph{unpredictably} overload links by as much as 30\% beyond their physical capacity, posing a significant challenge for WAN stability.

\subsection{The Failure of Conventional Solutions}
Traditional approaches to handling network uncertainty and preventing congestion prove inadequate against divergence-induced congestion.

\myparab{Coordinating input demands across slices} seems like an attractive solution: if all controllers agree on inputs, they should compute identical traffic allocations. However, this requires coordination protocols that compromise fault isolation. Consensus protocols like Paxos or Raft create dependencies between slices, as a Byzantine failure in one demand predictor could poison the agreed-upon demands, affecting all slices. Moreover, reaching consensus on millions of flow counters every few minutes across a planet-scale network introduces latencies that exceed the time budget for TE computation.

\myparab{Naively adjusting flow allocations at slice boundaries} will inherently drop traffic and lead to suboptimal network utilization, as downstream routers in a different slice may have either over- or under-allocated bandwidth for a given network path.

\myparab{Demand-oblivious routing} sidesteps prediction entirely by computing routes that work well for all possible demands~\cite{cope,texcp,oblivious-routing,oblivious-routing2,smore-nsdi}. These algorithms provide theoretical guarantees for worst-case performance regardless of actual demands. However, this robustness comes at an unacceptable cost: demand-oblivious schemes typically achieve orders of magnitude lower throughput than demand-aware routing, an unacceptable cost for many cloud providers~\cite{dote}.

\myparab{Reserving scratch capacity} could absorb the effects of divergence by ensuring links have sufficient headroom to handle conflicting routing decisions~\cite{swan}. Our measurements indicate this would require 30\% spare capacity---not just on specific links, but on \emph{all} links since divergence can occur unpredictably based on the complex interaction of noisy demands and optimization.

\myparab{Improving demand prediction} is an active area of research, with machine learning models that learn traffic patterns, smoothing noisy measurements, and ensemble methods that combine multiple predictors~\cite{dote,valadarsky2017learning}.  However, noisy demands are not just an artifact of modeling, but a fundamental property of the distributed measurement process itself. When routers drop counter updates under load and switch measurements inevitably vary even over small timescales, algorithmic sophistication cannot fully recover the missing information.

\myparab{Our core insight} is that noisy demand inputs in decentralized TE are not a byproduct of implementation choices, but an inevitable consequence of distributed measurement. Traditional networking approaches, which rely on linear programming formulations of TE to decide flow allocations, provide no inherent stability against such noise. As a result, small perturbations in demand estimates can lead to largely divergent flow allocations. Rather than attempting to eliminate noise—--which is unavoidable at WAN scale—--we instead reformulate the TE problem itself to be robust to perturbations, drawing on ideas from fields that have long confronted optimization under uncertainty.
\section{\sysname: Stable Decentralized TE}
\label{sec:design}

Divergence in decentralized TE occurs when flow allocations computed by different slice controllers—--even using
the same optimization formulation and software—--differ significantly due to perturbations in their predicted demands. 
TE is typically modeled as a linear program (LP) that optimally allocates network flows subject to capacity constraints.
Although LPs are computationally efficient, their solutions are highly sensitive to input perturbations.
The feasible region of an LP is a polytope, a convex geometric shape with flat faces and sharp vertices, whose optima lie at vertices. 
Standard optimization algorithms employed by LP solvers (\eg CPLEX, Gurobi), such as Simplex, may traverse completely 
different pivoting paths under slight demand perturbations, converging to entirely distinct solutions~\cite{networkflow}.

This challenge is not specific to the optimization algorithm (\ie Simplex \vs interior-point methods), but a fundamental property of linear programs, which can 
have multiple optimal solutions~\cite{boyd2004convex}. With no stable convergence point, even small demand perturbations can lead to 
dramatically different allocations, as we discuss in \S\ref{sec:motivation}. Because measurement noise is unavoidable, 
stability must come from the formulation itself: the TE objective should ensure that small input changes yield proportionally small changes in the solution vector.

\myparab{Our vision.} Rather than viewing divergence as an inevitable consequence of decentralized TE, \sysname elevates stability to a first-class design objective, on par with traditional network optimization goals such as throughput and fairness. Our approach combines two complementary innovations:

\begin{compactitem}
\item \myparab{Algorithmic stability through regularization (\S\ref{sec:design}):} In statistical learning theory, regularization prevents overfitting by penalizing models that are overly sensitive to training data~\cite{tikhonov1963solution, hinton1986learning, vapnik1998statistical}. We apply the same principle to TE, by using regularization to penalize routing solutions that are overly sensitive to demand perturbations. \sysname augments any existing convex TE objective with an L2 (quadratic) regularization term. This quadratic term fundamentally changes the optimization geometry: instead of a piecewise-linear objective that attains optima at sharp vertices of the feasible region, the problem becomes strictly convex with smooth, curved level sets. The result is a unique global optimum---a single stable point toward which all controllers converge, even under slightly different inputs. Small demand perturbations shift the optimum continuously rather than causing abrupt jumps. In \S\ref{sec:proof}, we prove that L2 regularization reduces divergence, regardless of the TE objective or traffic demands.
\item \myparab{Architectural resilience through intelligent slicing (\S\ref{sec:slicing}):} While regularization provides algorithmic stability, the physical partitioning of the network into slices determines fault resilience. Not all slicing strategies are equal: poor choices can concentrate traffic sources in a single slice, amplifying rather than containing failures. To address this, we develop a novel randomized algorithm to partition the WAN into slices that minimize the blast radius of controller faults. Our algorithm strategically distributes high-traffic nodes across slices to ensure balanced load and connectivity. By generating multiple candidate partitions that meet fault-tolerance goals, \sysname provides operators with flexibility to choose based on additional operational constraints like geography. Using data from a production cloud WAN, we show that our algorithm produces slices that reduce the blast radius by 79\% compared to the state-of-the-art, all while mitigating divergence.
\end{compactitem}

\sysname's regularizer preserves all existing TE objectives and constraints. Operators can continue to optimize for maximum throughput, minimum congestion, or fair allocation; \sysname simply makes these optimizations stable. Further, \sysname{} is immediately deployable, as operators only need to modify their optimization formulation, not their entire TE infrastructure (\eg measurement systems, routing protocols, switch hardware). As a result, slice controllers naturally converge toward similar routing decisions without global coordination, achieving stable, truly decentralized TE.

\subsection{Technical Challenges}
\sysname{'s} vision requires overcoming key technical challenges, which we solve with insights from studying WANs.

\begin{compactitem}

\item \myparab{Challenge 1: Scalability of quadratic TE.}
L2 regularization transforms the TE problem from a linear program (LP) to a quadratic program (QP). While regularized QPs are still convex and solvable in polynomial time, they typically take longer to solve than LPs. For planet-scale networks, this overhead could exceed the tight time budgets for TE computation. 

\myparab{Solution: Selective regularization.}
We observe that not all links are equally likely to experience divergence-induced congestion. Links that only carry traffic engineered by a single slice controller (as determined by the network topology) \emph{cannot} experience divergence. Further, we can use historical data to estimate the \emph{worst-case} load on each link; if this load is sufficiently below the link's capacity, divergence-induced congestion is highly unlikely. Using these insights, we selectively apply regularization only where needed. This keeps the problem size manageable while maintaining stability. Our techniques maintain equivalent runtime to baseline TE while scaling to the largest known WAN topologies (\S\ref{sec:scaling}).

\item \myparab{Challenge 2: Complexity of graph partitioning.} Graph partitioning is notoriously NP-hard, and adding fault-tolerance goals only further increases complexity. Naive approaches perform poorly. For instance, minimizing edge cuts between slices, a common partitioning objective, can group high-traffic nodes together and thus create slices with disproportionate load---exactly the opposite of fault tolerance. 

\myparab{Solution: Warm-start randomization.} Constructing slices that are locally connected and balanced in both resources (\ie nodes and routers) and traffic results in a large combinatorial search space. To efficiently solve this, we design a randomized algorithm 
that exploits a key property of WANs: traffic follows a heavy-tailed distribution where a few ``elephant'' sources generate most of the flow~\cite{demand-pinning,onewan}. By seeding slices with elephant sources, 
our algorithm substantially prunes the search space of suboptimal traffic-imbalanced slicing candidates. Then, it iteratively balances traffic and node counts across slices, using randomization to generate multiple candidates that operators can choose from based on operational constraints.
This prevents any slice from becoming a bottleneck while maintaining the connectivity needed for local control. We show that our algorithm generates slices that reduce the worst-case blast radius while keeping divergence minimal (\S\ref{sec:slicing}).

\end{compactitem}
\section{Regularization in traffic engineering}
\label{sec:design}

Preventing an ML model from \emph{overfitting} to data is well studied in the optimization literature. Applying L2 (Tikhonov) regularization to the objective function has greatly improved models' ability to generalize beyond the training data~\cite{tikhonov1963solution, hinton1986learning, vapnik1998statistical}. L2 regularization has primarily been studied in prediction problems that solve $\min_{\mathbf{w}} f(\mathbf{X}, \mathbf{y}, \mathbf{w})$ given 
training data $\mathbf{X}$, labels $\mathbf{y}$, and parameters/decision variables $\mathbf{w}$. However, with L2 regularization, the problem becomes
\begin{equation}
\min_{\mathbf{w}} f(\mathbf{X}, \mathbf{y}, \mathbf{w}) + \lambda \|\mathbf{w}\|_2^2
\end{equation}
where $\lambda$ is the regularization parameter (typically a small positive value, like 1e-4) and $\|\mathbf{w}\|_2^2$ is the L2-norm of $\mathbf{w}$ (\ie sum of squared weights). L2 regularization reduces overfitting because it penalizes large values of $w_{i}$, thereby reducing the impact of any one $w_{i}$ on the objective. 

The standard TE problem is fundamentally a linear program. The controller must decide how much bandwidth (\ie flow) to allocate to each network path in order to optimize for the desired objective, given a predicted demand matrix and link capacity constraints. 
In TE, the decision variables are $w_{p}$, the proportion of demand for a given flow that should be routed along path $p$. Applying L2 regularization to TE thus has the effect of \emph{load-balancing} traffic across paths for each flow. Further, the now \emph{quadratic} program (QP) encourages the optimizer to find interior solutions that are less affected by adjustments to the constraints (\eg demand and capacity inputs), and the strict convexity of the objective ensures that the solution is \emph{unique}, with sensitivity guarantees~\cite{boyd2004convex}.

\myparab{Beyond naive L2 regularization.} While L2 regularization can improve 
resilience to demand perturbations, operators are primarily interested in the \emph{congestion} induced by divergence. 
Standard L2 regularization encourages splitting traffic naively across paths, but disregards key topology-specific considerations. To combat divergence, we take inspiration from L2 regularization to design \sysname. 
\sysname penalizes all TE objectives with the term $\lambda \sum_{e \in E} u_{e}^{2}$, for links $e \in E$ and their utilizations $u_{e}$ (\ie $u_{e}=1$ means the link is operating under full load). Unlike a naive regularizer, penalizing by $u_{e}^{2}$ still encourages load-balancing traffic, but informs splitting ratios by prioritizing links based on their vulnerability to divergence-induced congestion. Across all slice solutions, links with high $u_{e}$ are the most likely to experience congestion when flow allocations diverge, while links with lower $u_{e}$ have more ``slack'' to tolerate divergence.

\myparab{Improvements for key objectives.} The three primary objectives in WAN traffic engineering are (1) maximizing throughput (MT)/minimizing unsatisfied demand, (2) maximizing concurrent flow (MCF), and (3) minimizing the maximum link utilization (MMLU). 
\sysname augments each of these objectives with L2 regularization. \sysname{'s} formulation for maximizing throughput is shown in Algorithm~\ref{alg:te}, whereas
we only show modifications to the objective function for MCF and MMLU. As is common in WAN TE, we write all formulations with decision variables as weights/splitting ratios for paths, as opposed to raw bandwidth allocations~\cite{dote, figret}. These are equivalent, as the allocation on path $p$ for flow $i$ with demand $d_{i}$ is $x_{p} = w_{p}d_{i}$. For all objectives, \sysname{} only modifies the objective function; constraints remain unchanged from standard LPs.

\begin{algorithm}[t]
    \begin{flushleft}
        \textbf{Inputs:}
    \end{flushleft}
    \begin{tabular}{p{2cm}p{6cm}}
        $G\langle V,E\rangle$ & network $G$ with vertices $V$ and links $E$\\
        $d_{i} \in D$ & traffic demand of source-destination pair $i$\\
        $P_{i}$ & set of paths for flow $i$\\
        $c_{e}$ & capacity of link $e$\\
        $I(p,e)$ & indicator variable for path $p$ using link $e$\\
        $\lambda$ & regularization parameter\\
    \end{tabular}
    \begin{flushleft}
        \textbf{Auxiliary Variables:}
    \end{flushleft}
    \vspace{-0.5em}
    \begin{tabular}{p{2cm}p{6cm}}
        $u_{e}$ & utilization of link $e$ \\
    \end{tabular}
    \begin{flushleft}
	    \textbf{Output:}
    \end{flushleft}
    \vspace{-0.5em}
    \begin{tabular}{p{2cm}p{6cm}}
        $w_{p}$ & flow splitting ratio along path $p$\\
     \end{tabular}
    \begin{flushleft}
        \textbf{Minimize} $\sum_{i} d_{i}(1 - \sum_{p \in P_{i}} w_{p}) + \lambda \sum_{e \in E} u_{e}^{2}$ \\[0.1em]
        \emph{subject to:}
    \end{flushleft}
    \begin{tabular}{ll}
        $\sum_{p \in P_{i}} w_{p} \leq 1$ & $\forall d_{i} \in D$\\[0.5em]
        $\sum_{i} d_{i} \sum_{p \in P_{i}} w_{p} I(p,e) \leq c_{e}$ & $\forall e \in E$\\[0.5em]
        $u_{e} = \frac{\sum_{i} d_{i} \sum_{p \in P_{i}} w_{p} I(p,e)}{c_{e}} $ & $\forall e \in E$\\[0.5em]
    \end{tabular}
\caption{\sysname{} Throughput Maximization}
\vspace{-0.5em}
\label{alg:te}
\end{algorithm}

\myparab{Maximizing throughput.} To align with the literature on regularization in ML prediction problems, we rewrite maximizing total flow as the equivalent problem of minimizing unsatisfied demand, subject to the same constraints. \sysname{} introduces the regularization penalty, $\lambda \sum_{e \in E} u_{e}^{2}$ (see Algorithm~\ref{alg:te}).

\myparab{Maximizing concurrent flow.} The MCF objective ensures fairness between flows by using the parameter $\gamma \in [0,1]$ to signify the proportion of requested demand each flow can send. While naive MCF maximizes $\gamma$, the \sysname{} objective now also incorporates the regularization penalty:
\begin{equation}
    \mathbf{maximize} \quad \gamma - \lambda \sum_{e \in E} u_{e}^{2}
\end{equation}
For fairness, the MCF demand constraint is $\sum_{p \in P_{i}} w_{p} = \gamma, \forall i$.

\myparab{Minimizing MLU.} The \sysname objective now becomes 
\begin{equation}
    \mathbf{minimize} \quad Z + \lambda \sum_{e \in E} u_{e}^{2},
\end{equation}
where $Z \geq u_{e}, \forall e \in E$, since $Z$ is the auxiliary variable for the maximum link utilization in the network. Unlike other objectives, which try to increase flow, MMLU aims to minimize congestion and thus lacks a capacity constraint to avoid scenarios in which no feasible solution is found (see \S\ref{subsec:mlu-formulation}).

\myparab{Why a quadratic penalty?} Consider the L1 regularizer, $\lambda \sum_{e \in E} u_{e}$, and a network with two links, $e1$ and $e2$. While allocations that result in $(u_{e1}, u_{e2})$ values of $(1,0)$ vs. $(0.5, 0.5)$ are equally optimal for the L1 regularizer, only the latter is optimal for the L2 regularizer. Not only does \sysname{} load-balance traffic to reduce the risk of divergence-induced congestion, but it also finds optimal solutions that are less sensitive to demand perturbations (\S\ref{sec:proof}).

\myparab{Choosing the regularization parameter.} While there is extensive research in machine learning about choosing $\lambda$~\cite{bergstra2012random}, we find that the choice of $\lambda$ is insignificant by several orders of magnitude in practice (see~\S\ref{subsec:appendix-reg-param}). The value of $\lambda$ can be fixed for each objective, enabling easy deployment without refresh.

\myparab{Routing protocol.}While \sysname is compatible with both slice and segment routing, we implement it with segment routing, as it is widely supported by switch vendors, commonly deployed in decentralized WANs~\cite{dsdn}, and agnostic to slicing.
\section{Provably reduced sensitivity to perturbations}
\label{sec:proof}
In addition to its intuitive benefits, \sysname{} also \emph{provably} reduces TE's sensitivity to demand perturbations. The formulations for all three standard TE objectives are already convex. Consider adding the \sysname{} regularizer:
\begin{equation}
    g(\mathbf{x}) = \lambda \sum_{e \in E} u_{e}^{2} = \lambda \sum_{e \in E} \Bigg(\frac{\sum_{i} \sum_{p \in P_{i}} \mathbf{x}_{p} I(p,e)}{c_{e}}\Bigg)^{2},
\end{equation}
where $\mathbf{x}$ is now the vector of path flow allocations (an equivalent formulation, as $x_{p} = w_{p} d_{i}$ from Algorithm~\ref{alg:te}).

Standard L2 regularization on the path variables benefits from strict convexity~\cite{boyd2004convex}, but introduces $O(k|V|^{2})$ quadratic terms to the objective (assuming $k$ paths per demand and $|V|$ nodes). In contrast, the \sysname{} regularizer is either naturally strictly convex, or can be easily made strictly convex, and bounds the number of quadratic terms in the objective at $O(2|E|) << O(k|V|^{2})$ while also directly targeting divergence-induced congestion. Adding the strictly convex $g(\mathbf{x})$ regularizer makes the objective \emph{strictly} convex, ensuring a \emph{unique} solution and enabling the optimal flow allocation to be \emph{Lipschitz continuous} with respect to input values. Therefore, the optimal solution, \ie path weights $w_{p}$, changes in a \emph{bounded} manner given demand perturbations, unlike conventional LPs that lack any such guarantees and are highly sensitive to demand values.

\subsection{Guaranteeing a unique solution}
Let $\mathbf{A}$ be the normalized edge-path incidence matrix, where $\mathbf{A}_{e, p} = I(p, e)/c_{e}$. Then, we can rewrite
\begin{equation}
  g(\mathbf{x}) 
  \;=\; 
  \lambda \sum_{e \in E}
  \Bigl(
    \sum_{p} x_p \,\frac{I(p,e)}{\,c_e\,}
  \Bigr)^2
  \;=\;
  \lambda \,\bigl\| \mathbf{A}\,\mathbf{x} \bigr\|_2^2.
\end{equation}

\begin{proposition}
The Hessian of $g(\mathbf{x})$ is
\begin{equation}
\label{eq:Hg}
\nabla^2 g(\mathbf{x})
\;=\;
2\,\lambda\,\mathbf{A}^\top \mathbf{A}.
\end{equation}
\end{proposition}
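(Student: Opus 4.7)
The plan is to rewrite $g(\mathbf{x})$ as a quadratic form, then read off gradient and Hessian via the standard identity for quadratic forms. By definition of the $\ell_2$ norm,
\begin{equation}
g(\mathbf{x}) \;=\; \lambda\,\|\mathbf{A}\mathbf{x}\|_2^2 \;=\; \lambda\,(\mathbf{A}\mathbf{x})^\top(\mathbf{A}\mathbf{x}) \;=\; \lambda\,\mathbf{x}^\top \mathbf{M}\,\mathbf{x},
\end{equation}
where $\mathbf{M} \;=\; \mathbf{A}^\top \mathbf{A}$ is symmetric by construction. This rewriting is the key step, since it reduces the proposition to differentiating a symmetric quadratic form.

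Next I would compute the gradient. For any symmetric matrix $\mathbf{M}$, the identity $\nabla_{\mathbf{x}}(\mathbf{x}^\top \mathbf{M}\mathbf{x}) = 2\mathbf{M}\mathbf{x}$ is standard, so $\nabla g(\mathbf{x}) = 2\lambda\,\mathbf{A}^\top\mathbf{A}\,\mathbf{x}$. Taking the Jacobian of this gradient with respect to $\mathbf{x}$ yields
\begin{equation}
\nabla^2 g(\mathbf{x}) \;=\; 2\lambda\,\mathbf{A}^\top \mathbf{A},
\end{equation}
which is exactly \eqref{eq:Hg}. If one prefers to avoid invoking the quadratic-form identity, the same conclusion follows by writing $g(\mathbf{x}) = \lambda \sum_e (\sum_p \mathbf{A}_{e,p} x_p)^2$ and computing mixed partials $\partial^2 g / \partial x_p \partial x_q = 2\lambda \sum_e \mathbf{A}_{e,p}\mathbf{A}_{e,q} = 2\lambda\,(\mathbf{A}^\top\mathbf{A})_{p,q}$, matching the matrix expression entry-wise.

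\textbf{Main obstacle.} There is no real obstacle: this is a routine differentiation of a quadratic form, and the only thing to be careful about is the factor of $2$ (which arises because $\mathbf{A}^\top\mathbf{A}$ is symmetric, so the off-diagonal cross-terms combine). The substantive content of the proposition is not the computation itself but what follows from it, namely that $\nabla^2 g$ is positive semidefinite (since $\mathbf{v}^\top \mathbf{A}^\top\mathbf{A}\,\mathbf{v} = \|\mathbf{A}\mathbf{v}\|_2^2 \geq 0$), which is what the surrounding discussion relies on to conclude convexity of $g$ and, when combined with the underlying convex TE objective, strict convexity and hence a unique solution with Lipschitz sensitivity to demand perturbations.
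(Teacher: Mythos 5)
Your proof is correct and matches the (implicit) argument in the paper: the paper states this proposition without proof, treating it as the standard differentiation of the quadratic form $\lambda\,\mathbf{x}^\top\mathbf{A}^\top\mathbf{A}\,\mathbf{x}$, which is exactly the computation you carry out, with the entry-wise mixed-partials check as a nice bonus verification.
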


\begin{lemma}
\( g(\mathbf{x}) \) is strictly convex if $\mathbf{A}$ has full column rank.
\end{lemma}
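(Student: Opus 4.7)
The plan is to invoke the standard sufficient condition for strict convexity of a twice continuously differentiable function: if the Hessian is positive definite everywhere on the domain, then the function is strictly convex. Since the preceding proposition identifies the Hessian of $g$ as $\nabla^2 g(\mathbf{x}) = 2\lambda\,\mathbf{A}^\top\mathbf{A}$ (a constant matrix, independent of $\mathbf{x}$), the claim reduces to a purely linear-algebraic statement about $\mathbf{A}^\top\mathbf{A}$.

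First I would note that for any vector $\mathbf{v}$ of the same dimension as $\mathbf{x}$, we have the identity $\mathbf{v}^\top(\mathbf{A}^\top\mathbf{A})\mathbf{v} = \|\mathbf{A}\mathbf{v}\|_2^2 \ge 0$, so $\mathbf{A}^\top\mathbf{A}$ is always positive semidefinite. The inequality is strict precisely when $\mathbf{A}\mathbf{v} \ne \mathbf{0}$. Under the hypothesis that $\mathbf{A}$ has full column rank, the null space of $\mathbf{A}$ is trivial, so $\mathbf{A}\mathbf{v} = \mathbf{0}$ forces $\mathbf{v} = \mathbf{0}$. Consequently $\mathbf{v}^\top(\mathbf{A}^\top\mathbf{A})\mathbf{v} > 0$ for every $\mathbf{v}\ne\mathbf{0}$, and since $\lambda>0$ the full Hessian $2\lambda\,\mathbf{A}^\top\mathbf{A}$ inherits positive definiteness.

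Finally I would conclude strict convexity of $g$ by combining positive definiteness of the Hessian with the standard second-order characterization: for any distinct $\mathbf{x},\mathbf{y}$ and $\theta\in(0,1)$, a second-order Taylor expansion (or equivalently the mean-value form of the remainder) expresses $g(\theta\mathbf{x}+(1-\theta)\mathbf{y})$ relative to $\theta g(\mathbf{x})+(1-\theta)g(\mathbf{y})$ in terms of a quadratic form in $(\mathbf{x}-\mathbf{y})$ evaluated against the Hessian, which is strictly negative by the argument above. Alternatively, one can observe directly that $g(\mathbf{x}) = \lambda\|\mathbf{A}\mathbf{x}\|_2^2$ is the composition of the strictly convex map $\mathbf{z}\mapsto\lambda\|\mathbf{z}\|_2^2$ with the injective linear map $\mathbf{x}\mapsto\mathbf{A}\mathbf{x}$ (injectivity being exactly the full-column-rank hypothesis), and composition of a strictly convex function with an injective affine map preserves strict convexity.

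There is no real obstacle here; the argument is essentially a textbook manipulation. The only subtle point worth flagging is the role of the full-column-rank hypothesis: without it, $\mathbf{A}^\top\mathbf{A}$ is only positive semidefinite and $g$ is merely convex, admitting an entire affine subspace of minimizers (those differing by a vector in $\ker \mathbf{A}$). This subtlety motivates why the paper must subsequently argue, in the context of \sysname, that the regularizer is applied in a way that yields full column rank (or that the overall objective becomes strictly convex after combining with the base TE objective), so that uniqueness of the optimal flow allocation is genuinely guaranteed.
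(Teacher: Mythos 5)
Your proof is correct and follows essentially the same route as the paper: it shows strict convexity via positive definiteness of the constant Hessian $2\lambda\,\mathbf{A}^\top\mathbf{A}$, using the full-column-rank hypothesis to rule out a nontrivial null space. If anything, your write-up spells out the step $\mathbf{v}^\top\mathbf{A}^\top\mathbf{A}\,\mathbf{v}=\|\mathbf{A}\mathbf{v}\|_2^2$ that the paper leaves implicit, so it is a slightly more complete version of the same argument.
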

\begin{proof}
For $g(\mathbf{x})$ to be strictly convex, $\nabla^2 g(\mathbf{x})$ must be positive definite. Thus, for any $\mathbf{z} \neq 0$ in $\mathbb{R}^{n}$, it must hold that \[
  \mathbf{z}^\top
  \bigl( \nabla^2 g(\mathbf{x}) \bigr) 
  \, \mathbf{z} = 2\lambda \, \mathbf{z}^\top 
  \bigl(\mathbf{A}^\top \mathbf{A}\bigr)
  \mathbf{z} > 0
\]
Since $\lambda > 0$, it follows that this expression is strictly positive if $\mathbf{A}$ has full column rank.
\end{proof}

If $\mathbf{A}$ is column-rank deficient, then it is sufficient to add $|E| - rank(\mathbf{A})$ phantom edge utilizations ($u_{e}'$) to the objective to ensure $\mathbf{A}$ has full column rank. These $u_{e}'$ simply bundle path weights in order to create $\mathbf{A}$ with full column rank, and can have arbitrarily large capacity and a much smaller value of $\lambda'$ to discount their impact. In practice, $u_{e}'$ are completely unnecessary, and our evaluation excludes them; further, path selection strategies can aim to ensure $\mathbf{A}$ is full rank without introducing any $u_{e}'$. Even with $u_{e}'$, the number of quadratic terms is still bounded in the number of links while ensuring sensitivity guarantees.

\subsection{Bounding the sensitivity of flow allocations}
Adding the strictly convex regularizer $g(\mathbf{x})$ to the convex objective makes the new objective  $f(\mathbf{x})$  strictly convex.
\begin{theorem}
If $f(\mathbf{x})$ is strictly convex, the optimal solution $\mathbf{x^{*}}$ is Lipschitz continuous with respect to the problem inputs $d_{i}$ and $c_{e}$:
$\|\mathbf{x^*(\mathbf{d_1})} - \mathbf{x^*}(\mathbf{d_2})\| \;\le\; L\,\|\mathbf{d}_1 - \mathbf{d}_2\|$ for all demand inputs $\mathbf{d_{1}}, \mathbf{d_{2}}$ and positive real constant $L$\text{~\cite{bonnans2013perturbation}}.
\end{theorem}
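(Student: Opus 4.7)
The plan is to combine the strict-convexity result of the preceding lemma with standard parametric sensitivity theory for convex quadratic programs. I would begin by sharpening the hypothesis: because $\mathbf{A}$ has full column rank (possibly after introducing the phantom edges described after the lemma), the Hessian $\nabla^2 g(\mathbf{x}) = 2\lambda\mathbf{A}^\top\mathbf{A}$ is uniformly positive definite, so $f$ is in fact \emph{strongly convex} with modulus at least $\mu = 2\lambda\,\sigma_{\min}(\mathbf{A}^\top\mathbf{A}) > 0$. This strong convexity—not merely the strict convexity stated—is what ultimately yields a Lipschitz, rather than only continuous, solution map, and the dependence of $\mu$ on $\lambda$ will make the stability benefit of regularization quantitatively visible in the final constant.

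Next, I would pass to the equivalent $x_p = w_p d_i$ variables so that the parameters $\mathbf{d}$ and $\mathbf{c}$ enter only through the right-hand sides of linear inequalities, putting the problem in the canonical form
\begin{equation*}
\min_{\mathbf{x}\ge 0}\ \tfrac{1}{2}\mathbf{x}^\top\mathbf{Q}\mathbf{x} + \mathbf{q}^\top\mathbf{x}\quad\text{s.t.}\quad \mathbf{B}\mathbf{x} \le \mathbf{b}(\mathbf{d},\mathbf{c}),
\end{equation*}
with $\mathbf{Q}\succ 0$ and $\mathbf{b}$ affine in $(\mathbf{d},\mathbf{c})$. For parametric strongly convex QPs of this form, the solution map is known to be single-valued, piecewise affine, and globally Lipschitz: I would invoke Theorem 4.26 / Proposition 4.47 of Bonnans and Shapiro (already cited in the theorem statement), proved via Robinson's strong-regularity theorem applied to the KKT system—or, equivalently, via the piecewise-linear structure of the parametric linear complementarity problem induced by the KKT conditions. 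The resulting Lipschitz constant factors as $L = \Theta(H/\mu)$, where $H$ is a Hoffman constant of the constraint matrix $\mathbf{B}$. Because $\mu \propto \lambda$, this decomposition makes explicit that larger $\lambda$ gives proportionally smaller sensitivity, while the $\lambda\to 0$ limit recovers the arbitrarily sensitive linear-program behavior observed in \S\ref{sec:motivation}.

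The main obstacle is that the feasible set $F(\mathbf{d},\mathbf{c})$ itself depends on the parameters, so the naive strong-convexity argument that substitutes $\mathbf{x}_2^*$ into the $\mathbf{d}_1$-problem is not legitimate: in general $\mathbf{x}_2^* \notin F(\mathbf{d}_1,\mathbf{c})$. Hoffman's lemma is exactly what bridges this gap, producing a feasible surrogate $\tilde{\mathbf{x}}_2 \in F(\mathbf{d}_1,\mathbf{c})$ at distance at most $H\|\mathbf{d}_1-\mathbf{d}_2\|$ from $\mathbf{x}_2^*$; strong convexity then converts the suboptimality gap at this surrogate into a squared-distance bound on $\|\mathbf{x}_1^*-\mathbf{x}_2^*\|$, which rearranges to give the desired linear Lipschitz estimate. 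A secondary subtlety is the constraint qualification required by Bonnans–Shapiro, but this is automatic here because polyhedrality yields a Robinson-type CQ without any Slater-style assumption.
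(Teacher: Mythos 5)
The paper itself contains no proof of this theorem: it is stated as an imported result, with the citation to Bonnans--Shapiro doing all of the work, so your overall plan---recast the problem as a parametric strongly convex QP with polyhedral constraints and invoke their perturbation theory---is essentially the paper's own route, spelled out in more detail. Two of your refinements are genuinely valuable: (i) strict convexity alone does \emph{not} yield a Lipschitz solution map in general (it gives uniqueness and continuity only), and it is the constant Hessian $2\lambda\mathbf{A}^\top\mathbf{A}\succ 0$, i.e.\ strong convexity with modulus proportional to $\lambda$, together with the polyhedral constraint structure, that makes the claim true here; and (ii) passing to the $x_p = w_p d_i$ variables so that the demands enter (essentially) only through right-hand sides is the right normalization. (One small caveat: $c_e$ appears inside $\mathbf{A}$ and hence in the quadratic form, not only in the right-hand side, so the capacity part of the statement needs the smooth-parameterization version of the perturbation results; for the demand perturbations that the displayed inequality actually concerns, this does not matter.)

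The gap is in the concluding mechanism you sketch. Hoffman's lemma gives a feasible surrogate $\tilde{\mathbf{x}}_2\in F(\mathbf{d}_1)$ with $\|\tilde{\mathbf{x}}_2-\mathbf{x}_2^*\|\le H\|\mathbf{d}_1-\mathbf{d}_2\|$, and strong convexity plus optimality of $\mathbf{x}_1^*$ gives $\tfrac{\mu}{2}\|\tilde{\mathbf{x}}_2-\mathbf{x}_1^*\|^2 \le f(\tilde{\mathbf{x}}_2)-f(\mathbf{x}_1^*)$; but the right-hand side can only be bounded---via Lipschitz continuity of $f$ on a bounded region and the symmetric surrogate in $F(\mathbf{d}_2)$---by a quantity of order $\|\mathbf{d}_1-\mathbf{d}_2\|$, so rearranging yields $\|\mathbf{x}_1^*-\mathbf{x}_2^*\| = O\bigl(\sqrt{\|\mathbf{d}_1-\mathbf{d}_2\|}\bigr)$: H\"older-$\tfrac12$ continuity, not the linear estimate, and not a constant of the form $\Theta(H/\mu)$. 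The Lipschitz conclusion must come from the ingredient you mention but then set aside: with $\mathbf{Q}\succ 0$ and constraints whose right-hand side is affine in $\mathbf{d}$, the KKT conditions form a parametric LCP whose unique primal solution is a continuous piecewise-affine function of $\mathbf{d}$ (equivalently, Robinson's strong regularity / Lipschitz behavior of polyhedral multifunctions, which is what the cited Bonnans--Shapiro results package), and piecewise affinity over finitely many polyhedral pieces is what yields a global Lipschitz constant. Either carry that argument out explicitly or rest on the citation as the paper does; the Hoffman-plus-strong-convexity rearrangement alone would establish only a square-root modulus of continuity.
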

In contrast, LPs are not strictly convex, so they lack a unique solution and cannot bound changes in flow allocations under demand perturbations; even slight input changes can cause uncontrolled divergence. Unlike LPs, \sysname{} is strictly convex, ensuring a unique solution and bounded changes in flow allocations. This results in provably lower divergence while introducing fewer quadratic terms than naive L2 regularization.

\section{Scaling \sysnamete}
\label{sec:scaling}

Quadratic programs typically scale more poorly than linear programs in the number of variables and constraints. 
However, we design strategies that enable \sysnamete to solve as quickly as LPs on large networks. For the MT and MCF objectives, we improve the runtime by reformulating constraints to better leverage the newfound curvature of the objective function~\cite{boyd2004convex}. For MMLU, we selectively prune the number of quadratic terms from the objective without sacrificing performance by leveraging insights from source routing.

For objectives with a capacity constraint (MT, MCF), simply reformulating each capacity constraint as $u_{e} \leq 1$ improves runtime sufficiently. This increases sparsity, halving the number of non-zero terms in the KKT matrix, and halves presolve time.

\subsection{Pruning risk-free links}
To reduce \sysnamete{'s} runtime for the MMLU objective, we remove the quadratic penalty from the objective for links that are determined to be \emph{divergence-free}.
\begin{definition}
    A link \( e \) is divergence-free if the set of sources for all paths that use \( e \) 
    belong to the same slice. Let \(\mathcal{S}(p)\) be the slice that the source of path \( p \)
    belongs to. Then, \( e \) is divergence-free if: \(\forall p_1, p_2 \ni e,\) \(S(p_1) = S(p_2)\).
\end{definition}
With source routing, the source of truth for computing flow allocations 
only differs when the source nodes for each path belong 
to different slices. If all paths using a link $e$ are sending flow at rates that 
have been determined by the same slice controller, then $e$ cannot experience divergence.

\subsection{Pruning low-risk links}
We can also remove quadratic terms from \sysnamete{'s} MMLU objective 
by identifying links that are highly unlikely to suffer from divergence. Specifically, we can 
prune regularizers for links that meet the criterion of being $\beta$\emph{-constrained}.
\begin{definition}
    A link \( e \) is \( \beta \)-constrained if the sum of the 
    maximum past observed demand values for paths that use \( e \), divided by its capacity, 
    is at most \( \beta \). Formally, let \(\mathbbm{1}_{e}(i)\) be an indicator that is 1 if any path 
    for flow \( i \) uses edge \( e \). Let \( d_{i,t} \) denote the past demand 
    for flow \( i \) at time \( t \). Then, \( e \) is \( \beta \)-constrained if:
    \[
    \frac{1}{c_{e}} \sum_{i} \mathbbm{1}_{e}(i) \cdot \max_t d_{i,t} \leq \beta
    \]
\end{definition}

$\beta$-constrained is a highly conservative criterion for determining whether a link $e$
will ever suffer from divergence. It assumes the worst-case configuration 
given past demands and asks: if \emph{all} demands that have 
some path using link $e$ were to send their \emph{maximum} observed demand \emph{fully}
along any path that uses $e$, would the total flow on $e$ exceed $\beta \cdot c_{e}$?
Typically, values of $\beta \geq 1$ are sufficient given the risk-averse definition, 
while lower $\beta$ values are more conservative.
\section{Evaluating \sysnamete{}}
\label{sec:eval-divergence}

We evaluate \sysnamete{} on four WAN topologies (Table~\ref{tab:topologies}). We collect traffic demands for \meta{}, a large content provider WAN, over a 6-month period and evaluate \sysnamete{} on high-priority 
flows. Demand matrices for GEANT~\cite{uhlig2006providing} and ATT~\cite{teavar} are publicly available, while we simulate demands for KDL using the popular gravity model~\cite{zhang2003fast}. For ATT, GEANT, and \meta{}, 
we generate candidate partitionings using our algorithm in \S\ref{sec:slicing} and pick the slicing configuration with the lowest blast radius. 
For KDL, we generate a random slicing configuration. For all topologies, the slicing configuration aims for decentralization, with 
$k\approx\sqrt{n}$ slices, each with roughly $\sqrt{n}$ nodes (given $n$ nodes total). We also evaluate \sysnamete{} under different slicing configurations in \S\ref{sec:slicing}.

\begin{figure*}[t]
    \centering   
    \includegraphics[width=0.75\textwidth]{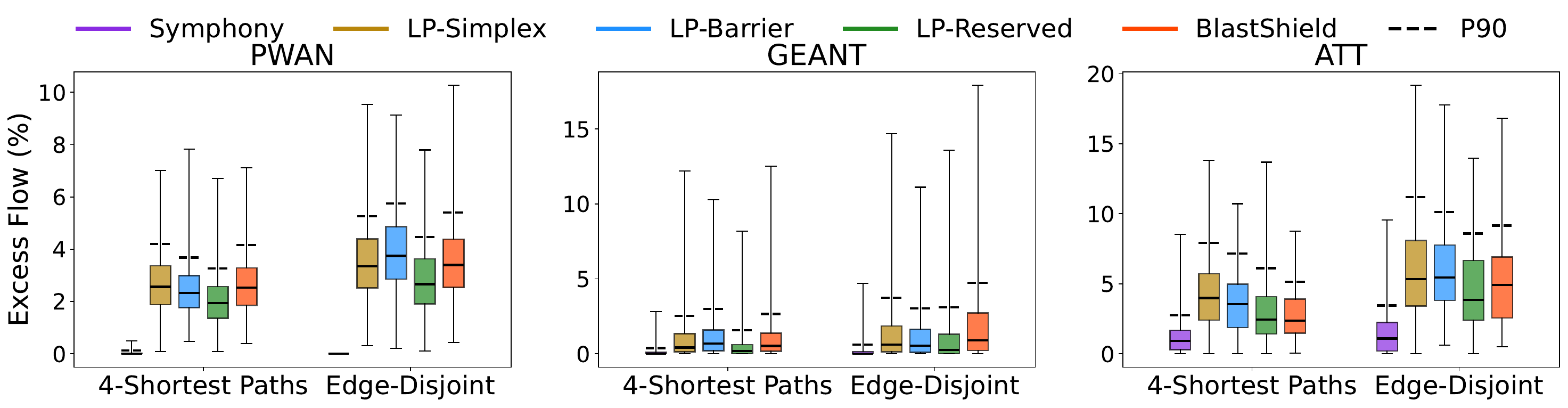}
   \vspace{2mm}
    \caption{The percentage of flow that exceeds link capacities for the maximum throughput objective.}
     \label{fig:divergence-maxflow}
\end{figure*}

\begin{figure*}[h]
      \centering   
    \includegraphics[width=0.75\textwidth]{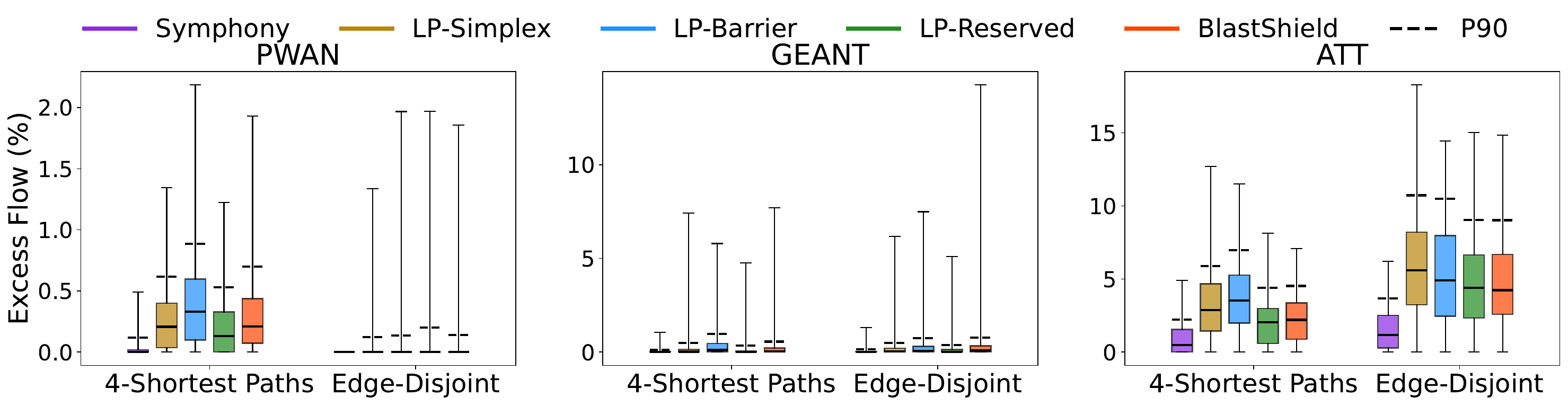}
   \vspace{2mm}
    \caption{The percentage of flow that exceeds link capacities for the maximum concurrent flow objective.}
     \label{fig:divergence-mcf}
\end{figure*}

\begin{figure*}[h]
  \centering   
  \includegraphics[width=0.75\textwidth]{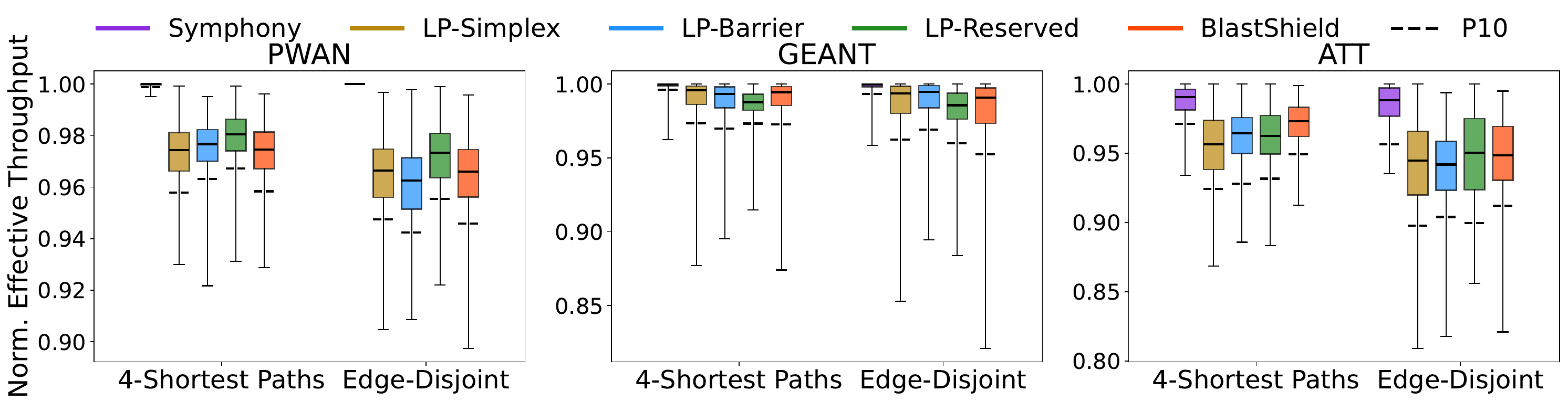}
 \vspace{2mm}
  \caption{The effective throughput for the maximum throughput objective, normalized by the \oracle{'s} throughput.}
   \label{fig:effective-flow}
\end{figure*}

We randomly sample 1000 demand matrices for \meta{} and GEANT, and 
100 for ATT (due to limited available data~\cite{teavar}). Since we simulate demands for KDL, we generate one base demand matrix and use it for 50 iterations with the top 20\% of demands, as demand matrices are often sparse according to prior work~\cite{onewan} and based on our analysis of \meta{}. In each iteration, we generate $k$ perturbed demand matrices, one per slice, from the iteration's sampled demand matrix. The perturbed demand is generated by sampling from the empirical distribution of demand deviations between slice controllers collected from the decentralized production WAN of a large commercial cloud provider. We then run \sysnamete{} and baselines for all three TE objectives in each slice of the WAN using both 4-shortest paths and edge-disjoint paths. We use $\lambda=$1 for MT and $\lambda=$1e-4 for MCF and MMLU. We find that modifying $\lambda$ within several orders of magnitude has no effect on performance (see \S\ref{subsec:appendix-reg-param}). For KDL, we also evaluate the runtime-optimized version, \sysname{-Opt} (see \S\ref{sec:scaling}).

\begin{table}[h!]
\centering
\small
\begin{tabular}{lccc}
\textbf{Topology} & \textbf{\# Nodes} & \textbf{\# Links} & \textbf{Demands} \\
\midrule
GEANT & 23 & 37 & 4.5 months \\
ATT   & 25 & 112  & YATES \\
PWAN  & O(50) & O(400) & O(6) months \\
KDL   & 754 & 895 & Gravity model \\
\end{tabular}
\vspace{3mm}
\caption{WAN Topologies Evaluated}
\label{tab:topologies}
\end{table}

\begin{figure*}[h]
  \centering   
  \includegraphics[width=0.75\textwidth]{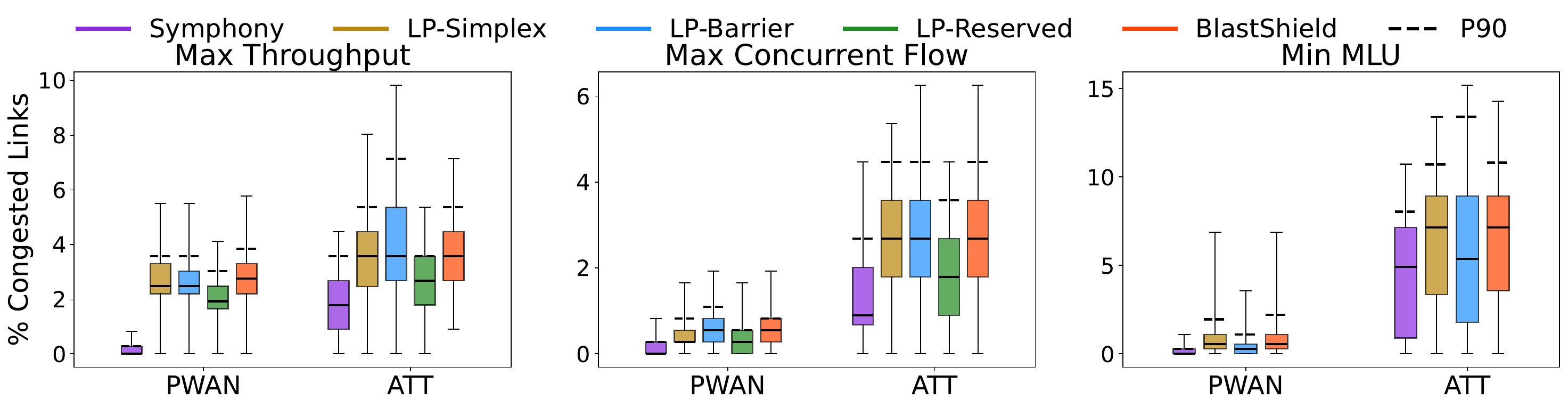}
 \vspace{2mm}
  \caption{The percentage of links that are congested in each iteration, using 4-shortest paths.}
   \label{fig:congestion-incidence}
\end{figure*}
\begin{figure*}[h]
    \centering   
    \includegraphics[width=0.8\textwidth]{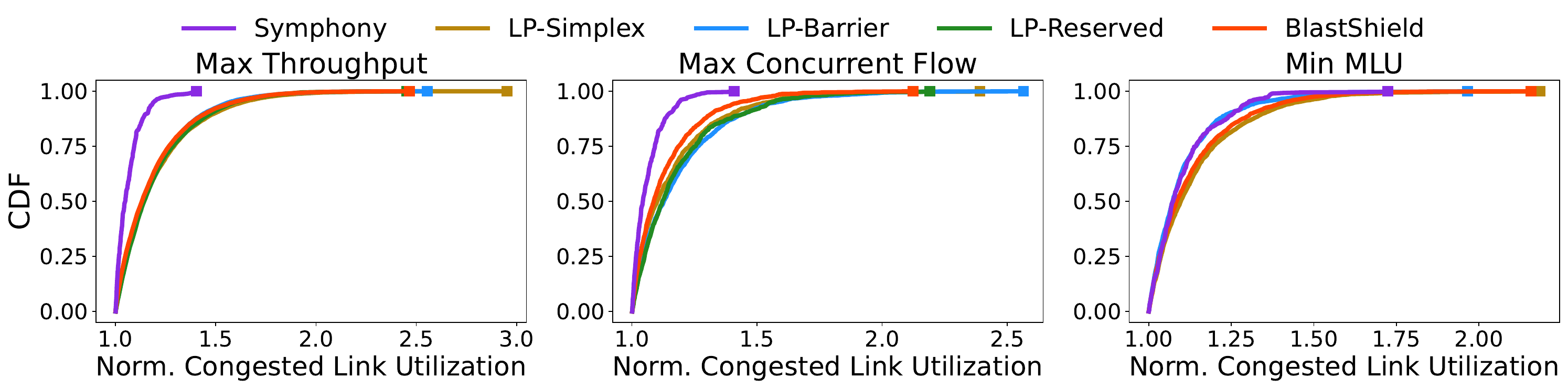}
   \vspace{2mm}
    \caption{CDFs of link utilizations for congested links, across all iterations, for PWAN with 4-shortest paths.}
     \label{fig:congestion-magnitude}
\end{figure*}

\myparab{Baselines.}We mainly compare \sysnamete{} to classical TE controllers used in decentralized WANs.  While deep learning has yet to see production deployment for TE, we run separate experiments to compare with \dote~\cite{dote}, a deep-learning baseline. We use source routing for all controllers besides \blastshield, and solve with 
Gurobi~\cite{gurobi} for all classical schemes.
\begin{compactitem}[leftmargin=*]
\item \textbf{\simplex:} Solves the standard TE LP problem using the Simplex optimization algorithm.
\item \textbf{\barrier:} Solves the standard TE LP problem using the Barrier (\ie interior point) optimization method.
\item \textbf{\scratch:} Reserves 5\% capacity on all links to provide ``slack'' for divergence, thereby solving the standard TE LP problem but with capacity constraints $\leq 0.95 \cdot c_{e}$.
\item \textbf{\blastshield \cite{blastshield}:} Solves the standard TE LP problem, but uses BlastShield's slice routing instead of source routing. Due to \blastshield{'s} slice routing protocol that prevents path weights from being zero, all weights are $\geq 0.001$. 
\item \textbf{\dote \cite{dote}:} A DNN, trained on historical demands, that is more robust to demand fluctuations. Due to its entirely different pipeline, we conduct a separate evaluation in \S\ref{sec:eval-dote}.
\end{compactitem}

For all objectives, we implement an \oracle controller that \emph{cannot experience divergence}. \oracle is a centralized controller that knows the exact 
predicted demands in each slice and computes a flow allocation accordingly. Thus, \oracle 
can sustain the maximum throughput and concurrent flow \emph{without any congestion} and provides 
a baseline level of congestion for evaluating the MMLU objective (see \S\ref{subsec:mlu-formulation} for more details). We also considered (semi-)oblivious approaches such as COPE~\cite{cope, obliviouste}, but these have only been formulated for the MMLU objective, do not scale, and are outperformed by much more recent baselines that we directly compare to. 

\begin{figure}[t]
  \centering   
 \includegraphics[width=0.87\linewidth]{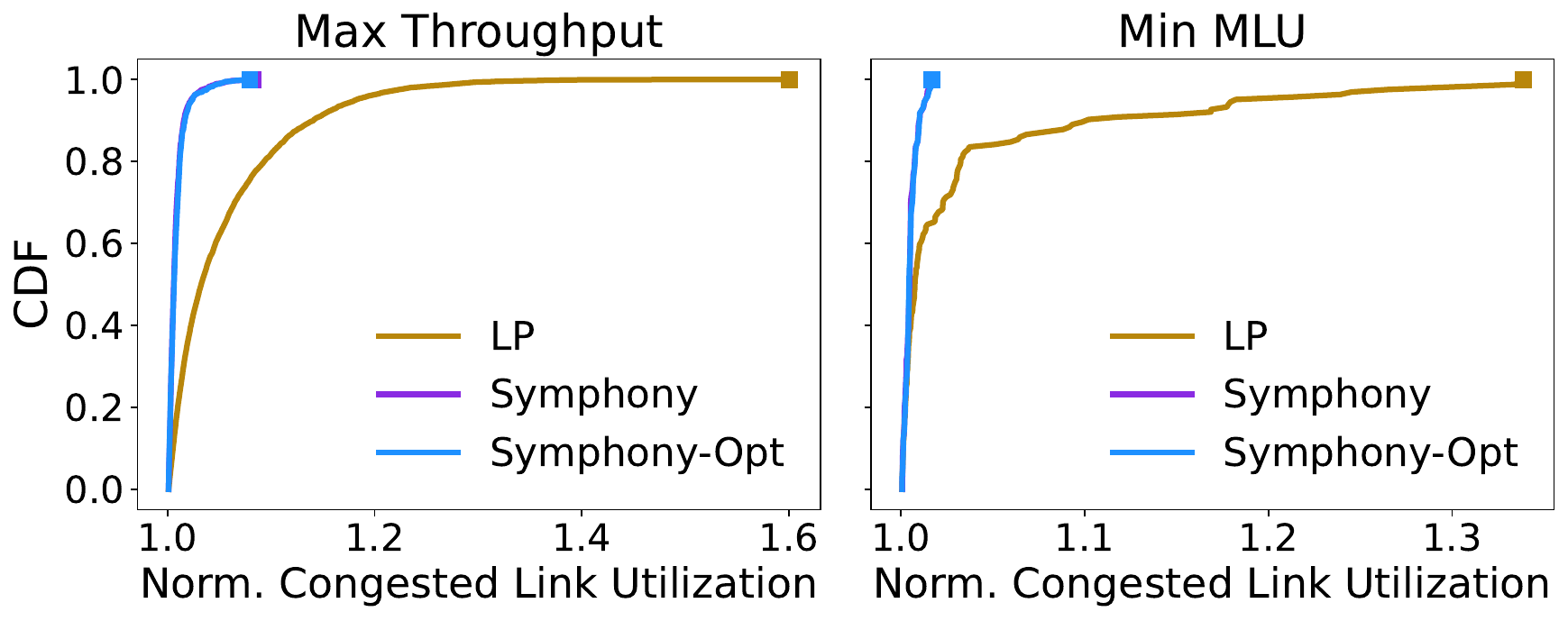}
  \caption{Evaluation on KDL using 4-shortest paths.}
   \label{fig:kdl-results}
\end{figure}

\myparab{Key metrics.} We compare \sysname{} and baselines across the following metrics for decentralized WAN performance. \textbf{Effective throughput:} Throughput from computed flow allocations, subtracted by the sum of link over-allocations, and then normalized by the throughput of \oracle. This is often called \emph{congestion-free throughput}. \textbf{Excess flow:} The percent of total flow that exceeds link capacities and thereby causes congestion. \textbf{Congested links:} We measure both the \emph{percent} of links that are congested and the \emph{utilizations} of congested links (normalized by the \oracle{'s} MLU). A link is congested if its utilization exceeds $1$ for MT and MCF and if it exceeds the \oracle{'s} MLU for MMLU, since even the \oracle{} can be congested due to the MMLU formulation's lack of a capacity constraint.

\subsection{Lowers congestion for all objectives}
\sysname{} substantially lowers divergence-induced congestion in decentralized TE across all objectives, topologies, and path selection strategies. In all boxplots, whiskers are the minimum and maximum values observed over all iterations.

\myparab{Excess flow and effective throughput.} While congestion directly captures performance for the MMLU objective, which ensures all demands are satisfied, effective throughput and excess flow capture how well congestion is managed for MT and MCF, where the amount of flow sent can differ between baselines (\ie to avoid comparisons where sending much less flow reduces congestion). Figures~\ref{fig:divergence-maxflow} and~\ref{fig:divergence-mcf} show that across all path selection strategies and topologies, \sysname{} has the minimum percent of flow that exceeds link capacities for both MT and MCF. \sysname{} sustains \emph{no excess flow} on average and $14\times$ less than \blastshield in the worst case. Figure~\ref{fig:effective-flow} shows that \sysname{} sustains the highest effective throughput over all baselines: in \meta, 99.9\% (average) and $\geq$99.5\% (worst case) of the divergence-free \oracle{}. \sysname{} performs especially well in the worst case, as its penalty scales quadratically with link utilization. For MT and MCF, \sysname{} has the lowest congestion while sustaining high throughput.
\begin{figure}[t]
  \centering
  \begin{subfigure}[b]{0.48\linewidth}
    \includegraphics[width=\textwidth]{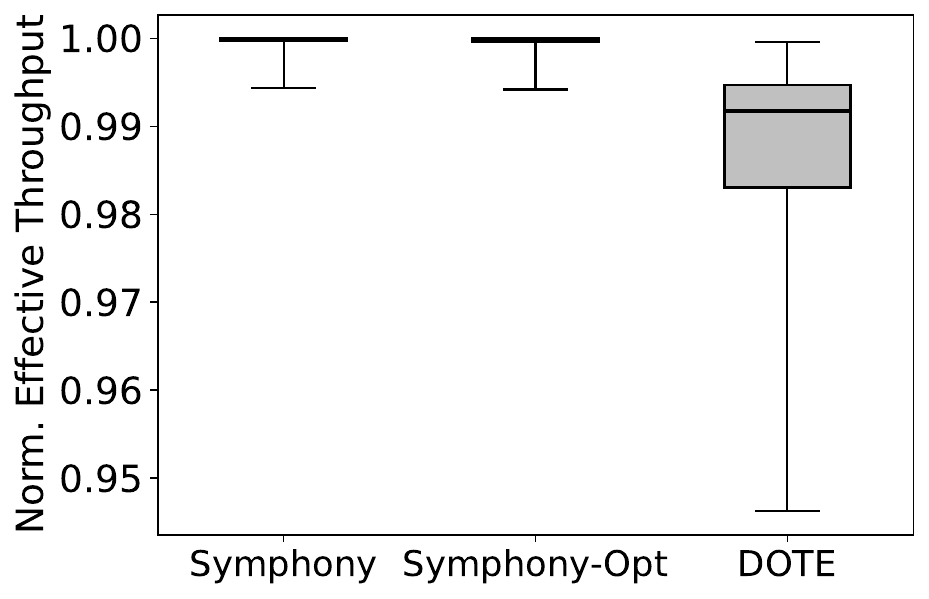}
   \caption{\small{MT: random.}}
   \label{fig:max-flow-dote}
\end{subfigure}
  \begin{subfigure}[b]{0.48\linewidth}
      \includegraphics[width=\linewidth]{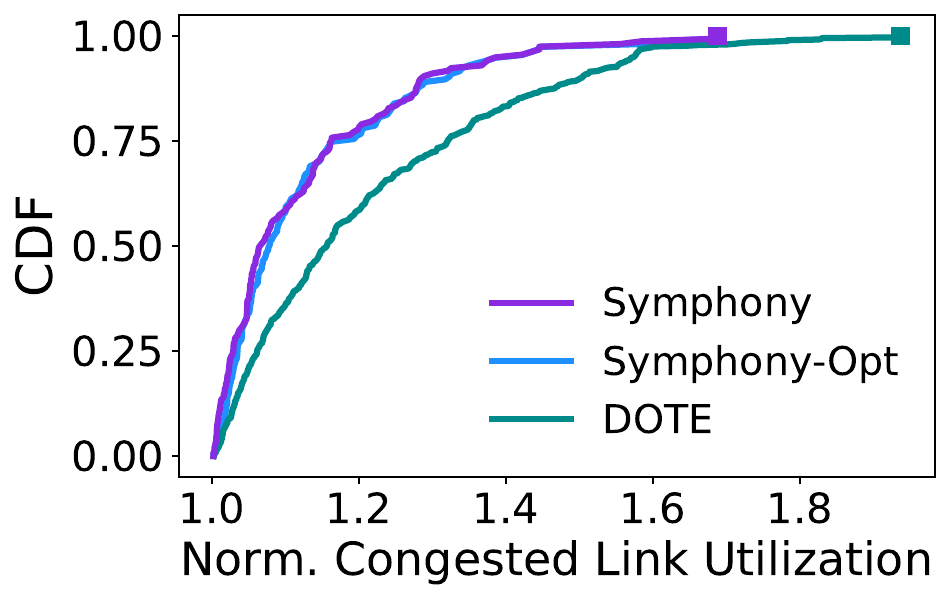}
     \caption{\small{MMLU: worst-case.}}
     \label{fig:min-mlu-dote-worst}
    \end{subfigure}
 \vspace{3mm}
  \caption{Comparison with DOTE using 4-shortest paths.}
   \label{fig:dote}
   \vspace{2mm}
\end{figure}

\begin{figure*}[t]
    \centering
    \begin{subfigure}[b]{0.24\textwidth}
      \includegraphics[width=\textwidth]{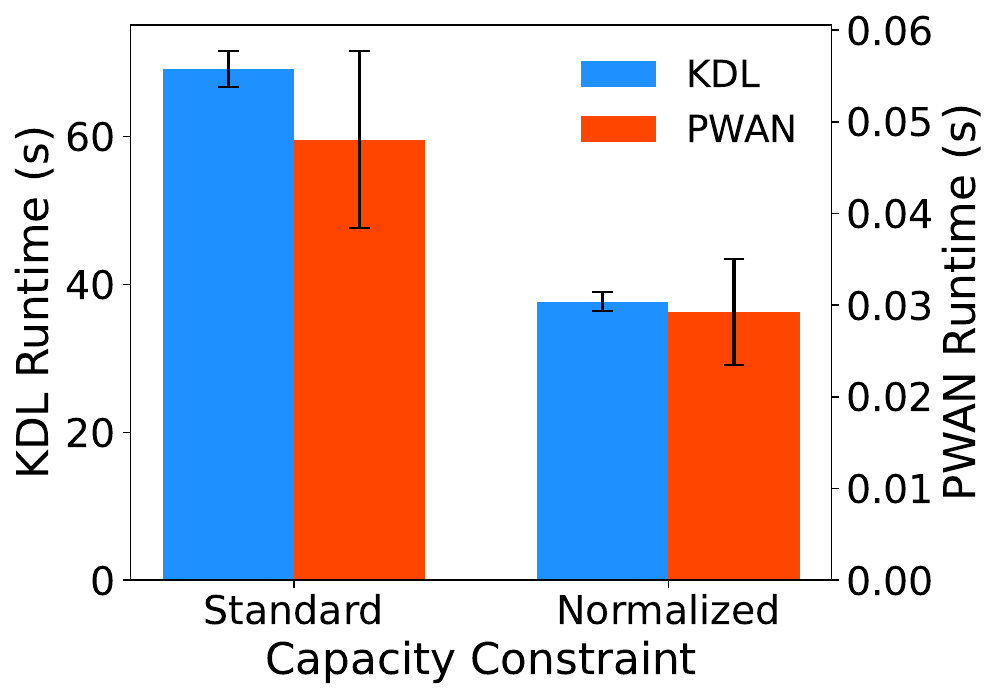}
     \caption{\small{Max throughput runtimes.}}
     \label{fig:max-flow-runtime}
  \end{subfigure}
    \begin{subfigure}[b]{0.34\textwidth}
      \includegraphics[width=\textwidth]{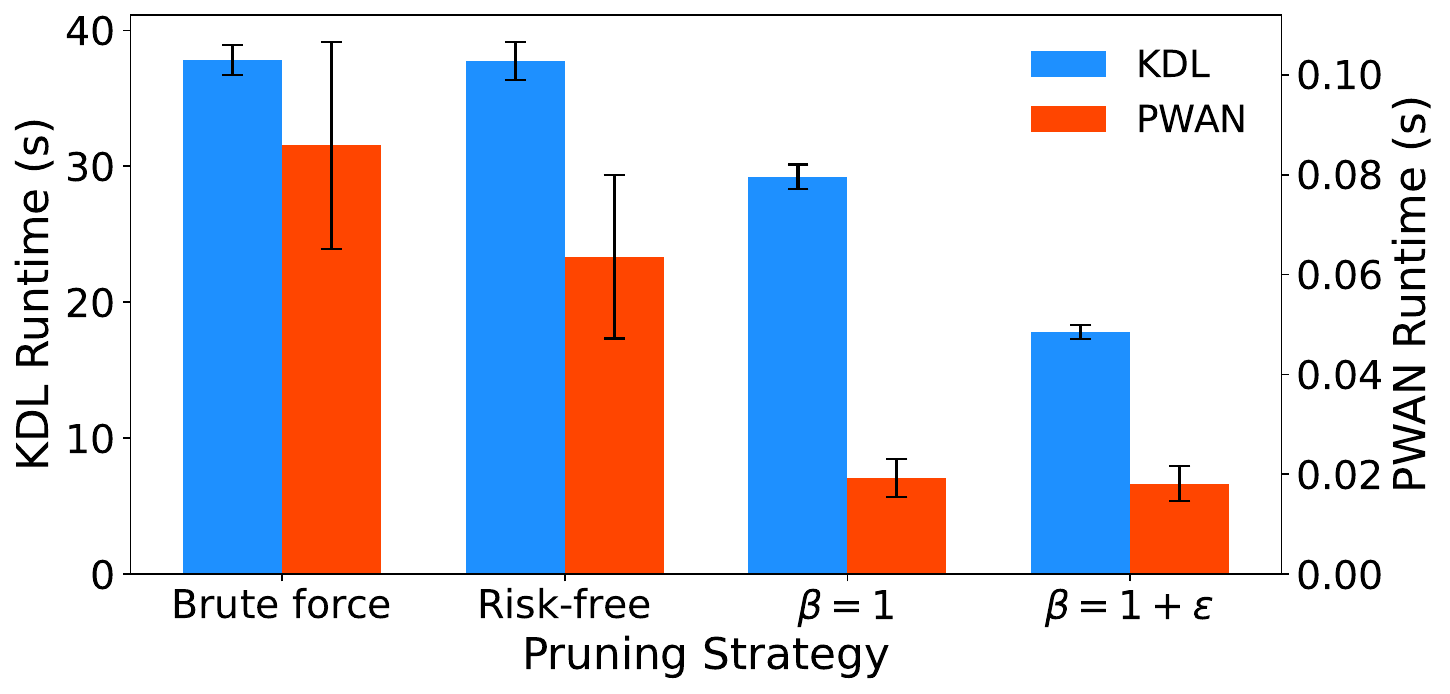}
     \caption{\small{Min MLU runtimes.}}
     \label{fig:min-mlu-runtime}
    \end{subfigure}
    \begin{subfigure}[b]{0.34\textwidth}
        \includegraphics[width=\textwidth]{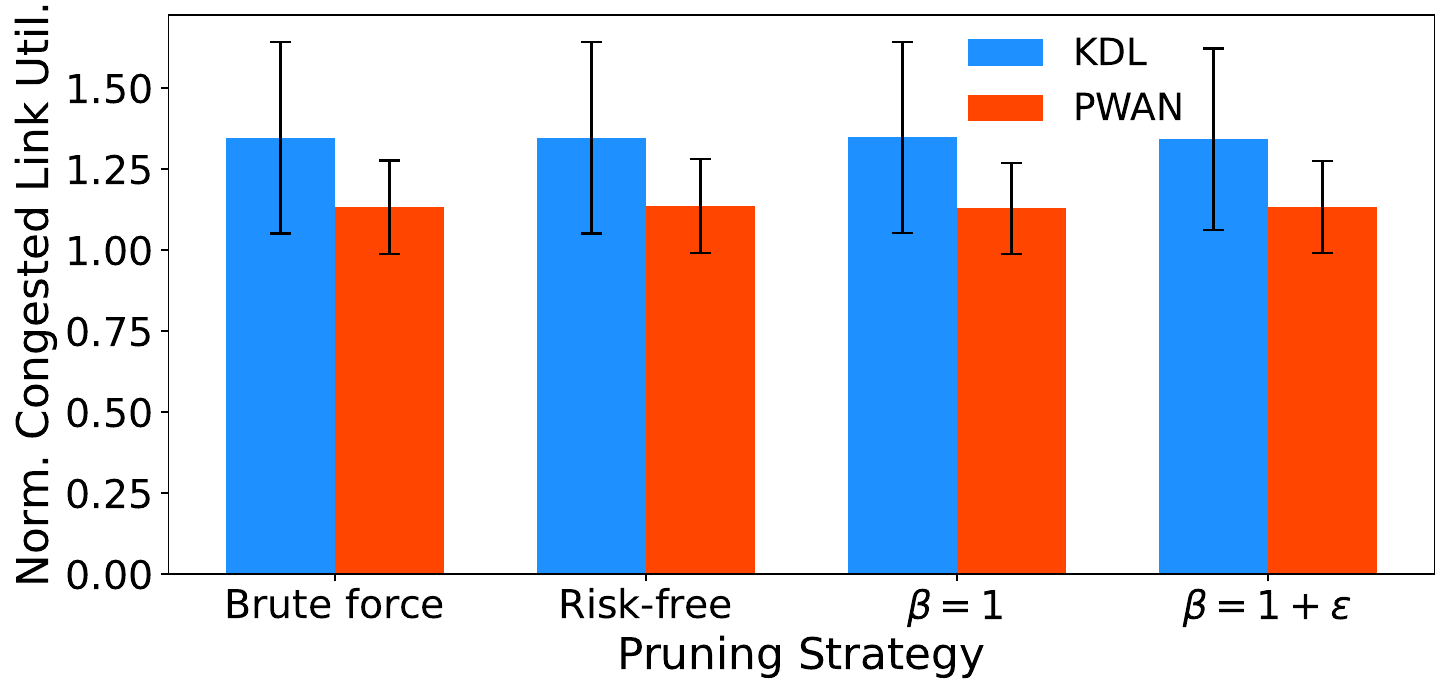}
       \caption{\small{Min MLU solution quality.}}
       \label{fig:min-mlu-optimality}
      \end{subfigure}
   \vspace{3mm}
    \caption{Ablation studies for \sysname{'s} runtime optimizations, based on 100 iterations (700 TE computations, due to 7 slices) for \meta{} and 1 iteration  (25 TE computations, due to 25 slices) for KDL. Bar heights are the means and error bars are standard deviations. In \ref{fig:min-mlu-runtime} and \ref{fig:min-mlu-optimality}, pruning strategies are applied additively from left to right, and $\epsilon=0.1$ for \meta{} and $\epsilon=1$ for KDL.}
     \label{fig:runtimes}
     \vspace{2mm}
\end{figure*}

\myparab{Congested links.} The key congestion metrics for all three objectives are the percent of links that are congested and the distribution of their utilizations. Figure~\ref{fig:congestion-incidence} shows that across topologies and objectives, \sysname{} congests the fewest links due to divergence. In \meta{}, \sysname{} congests $7\times$ and $5.8\times$ fewer links than \blastshield{} in worst-case scenarios for the MT and MMLU objectives, respectively. Figure~\ref{fig:congestion-magnitude} 
shows CDFs of utilizations of congested links (normalized by the \oracle{'s} MLU). Across all objectives, \sysname{} reduces congestion, 
especially in worst-case scenarios, and generalizes to all topologies evaluated (see \S\ref{sec:appendix-congestion}). 
For MT in \meta{}, \sysname{} ensures 90\% of congested links are oversubscribed by <15\% while reducing the maximum oversubscription by 72.4\% and 79.3\% 
compared to \blastshield and \simplex, respectively. For KDL, we only compare \sysname{} and \sysname{-Opt} to \simplex due to long solver times for baselines and find that both versions consistently and near-identically
reduce link oversubscription (Figure~\ref{fig:kdl-results}).

\subsection{Comparison with learning-based TE}
\label{sec:eval-dote}
We compare \sysname{} with \dote~\cite{dote} on \meta{} for the MT and MMLU objectives, with a different set of experiments.
We choose 300 consecutive demand matrices in the first 1/4th of timestamps, perturb these to create a training 
dataset for each slice, and for each slice, train \dote for 50 epochs for each objective (since a separate model 
would run TE for each slice in a decentralized WAN). We then run TE on 100 demand matrices, matching \dote{'s} 
recommended 75/25 train/test split. In the first experiment, we randomly sample these demand matrices from later 
timestamps to test average behavior while in another, we choose later demand matrices with the greatest Euclidean 
distance from the demands used in training to compare worst-case performance. Figure~\ref{fig:max-flow-dote} shows 
that, even in the average case, \sysname{} outperforms \dote{} in maximizing congestion-free throughput for the MT objective. 
For MMLU, \dote performs comparably with \sysname{} in average scenarios (see \S\ref{subsec:dote-appendix}), but  
underperforms in worst-case scenarios (Figure~\ref{fig:min-mlu-dote-worst}), where the model receives inputs that deviate 
from its training data. Unlike \dote{}, \sysname{} is \emph{oblivious} to \emph{historical} demands, enabling it to generalize well in out-of-distribution settings, 
and does not require retraining on large solver-labeled datasets.

\subsection{Scaling optimizations and TE runtime}
\label{sec:eval-runtime}

Figure~\ref{fig:runtimes} shows ablation studies that quantify \sysname{'s} runtime improvements after applying the optimizations in \S\ref{sec:scaling}, for the larger topologies and the MT and MMLU objectives. Figure~\ref{fig:max-flow-runtime} shows how normalizing capacity constraints nearly halves the runtime for both \meta{} and KDL. Figure~\ref{fig:min-mlu-runtime} shows how pruning strategies, from removing divergence-free links to $\beta$-constrained links (for different values of $\beta$), reduce the MMLU runtime by $2.12\times$ in KDL and over $4.74\times$ in \meta{}, while Figure~\ref{fig:min-mlu-optimality} shows that these runtime optimizations do not impact \sysname{'s} performance in reducing divergence-induced congestion. After optimizations, \sysname{} on average solves \emph{as quickly as} an LP on KDL: $37.7 s$ (\sysname{}) vs. $36.7 s$  (LP) for MT, and $17.8 s$ vs. $15.8 s$ for MMLU.
\section{Partitioning the WAN for resilience}
\label{sec:slicing}
While \sysname{} provably reduces divergence-induced congestion regardless of the slicing strategy, slicing itself is central to decentralized TE.
Slicing not only determines the extent of fault isolation, but also shapes how traffic 
allocations diverge across the network. Thus, to fully understand the role of \sysname in decentralized TE, we ask: 
\emph{how does the slicing structure influence both divergence and fault isolation?} 

To answer this, we design and evaluate a randomized slicing algorithm that generates hundreds of feasible slicing 
configurations within minutes and achieves a blast radius near the theoretical minimum, outperforming state-of-the-art decentralized WAN architectures. Our analysis reveals a fundamental tension: slicing strategies that reduce divergence often concentrate traffic within a single slice, amplifying the impact of faults. Crucially, because \sysname{} stabilizes allocations regardless of the slicing configuration, slicing can focus on its primary role of limiting the blast radius. We validate this by evaluating \sysnamete{} across slicing structures produced by both our algorithm and naive approaches, showing that it consistently outperforms all baselines in mitigating divergence-induced congestion \emph{regardless of the slicing configuration}.

\subsection{Limiting the blast radius of faults}
Decentralized WANs partition the WAN into slices of nodes in order to limit the \emph{blast radius} of controller faults. 

\myparab{Existing graph partitioning algorithms are insufficient.} Graph partitioning is notoriously NP-hard, and existing algorithms~\cite{kernighan1970efficient, newman2006modularity, blondel2008fast, hagen1992new} are insufficient for decentralized WANs. First, these algorithms often aim to minimize the number of edges cut by a partition or maximize the density of edges within a partition---properties with no clear relationship to our goal of fault isolation in the control plane. Further, few algorithms handle $k$-way partitioning (as opposed to bipartitioning), a necessity for WAN slicing, or ensure partitions are traffic-balanced, which is critical for fault tolerance.

\myparab{Shielding resources.} Imbalanced slices, in terms of nodes per slice, are unfavorable for fault isolation since controller faults in larger slices have a disproportionate impact on datacenters.

\myparab{Source routing defines the blast radius.} In \blastshield, the blast radius of a slice is defined as the percent of traffic impacted by a controller fault. However, slice routing complicates the definition of the blast radius, as traffic can be impacted by controller faults for \emph{any} slices it \emph{transits} through. Thus, the amount of traffic impacted by a controller fault depends on the path through the network, which is difficult to reason about and changes with every TE iteration. Designing WAN slices to minimize the expected blast radius using \blastshield{'s} routing strategy would be combinatorially explosive. In contrast, source routing clearly defines the blast radius as the maximum percent of traffic that \emph{originates} from any slice, capturing the worst-case traffic impact of a controller fault.
\begin{definition}
The blast radius, $r$, of a slicing configuration with $k$ slices $S_{j}$, where $j$=$1$,...,$k$ and $\bigcup_{j=1}^{k} S_j = V$, is given by
\begin{equation}
    r = \max_{1 \leq j \leq k} \sum_{\text{src}(i) \in S_{j}} d_{i}
\end{equation}
\end{definition}

\myparab{Shielding traffic.} Based on the definition of $r$, each WAN slice should contain a roughly equal amount of \emph{egress} demand. While demands fluctuate, their relative magnitudes are consistent and can be approximated with the average past demand. Because source routing assigns flows to the controller of their source node, this approach aims to minimize $r$ in expectation.
\subsection{A randomized algorithm for fault-tolerant slicing}
We propose a novel randomized algorithm to generate candidate slicing configurations with low blast radius by design. (A \emph{slicing configuration} is a unique set of slices that constitutes a solution.) This approach provides the operator with flexibility to select a slicing configuration among candidates that best suits operational needs. We first describe the algorithm (with more detail in Algorithm~\ref{alg:balanced_partitions} of \S\ref{appendix:ilp}) and then discuss how slicing impacts divergence in \S\ref{subsec:slicing-divergence}.

\myparab{Node weights.} First, we define a node $v$'s weight, $\phi_{v}$, as the sum of the average or maximum demand value for all flows originating at $v$. This estimates how much traffic originates from $v$.

\myparab{Inputs.} The operator provides node weights $\phi_{v}$, the desired number of slices $k$, target slice sizes $s_{j}$, target per-slice total weight $T = (\sum_{v \in V} \phi_{v})/k$, and a tolerance $\epsilon \in [0,1]$ that specifies the deviation from $T$ that still allows the solution to be deemed feasible. For example, $\epsilon = 0.2$ means the operator requires the total weight of a slice to be within 20\% of $T$. Higher $\epsilon$ values relax the problem but reduce fault-tolerance guarantees.

\myparab{Initializing slices with elephant sources.} Our algorithm aims to assign nodes to slices to balance the total weight, $\sum_{v \in s_{j}} \phi_{v}$, per slice; this avoids concentrating too much traffic in any one slice. However, the distribution of $\phi_{v}$ is far from uniform, leading to a few \emph{elephant sources} that send a vast majority of the traffic (see \S\ref{appendix:elephant}). We can leverage this to extensively prune the search space, by initializing each slice with an elephant source. If $v_{1}$ is an elephant source assigned to slice $S_{1}$, assigning another elephant source $v_{2}$ also to $S_{1}$ would \emph{immediately} violate the constraint that all slices should have total weight within $T(1\pm\epsilon)$.

\myparab{$k$-way BFS.} With each slice initialized, the algorithm aims to construct slices that are (1) connected, (2) roughly equal in the number of nodes, and (3) roughly balanced in weight.

In each round, the algorithm iterates through each slice and assigns a node to the slice. Each slice maintains a list of \emph{candidates}, which are neighbors of any nodes that have already been included in the slice. When a node is selected by one slice, it is removed from other slices' candidates lists.


The algorithm terminates successfully when all slices have met size constraints and have total weight within $(1 \pm \epsilon)T$, and unsuccessfully if not (or if candidates lists for any slices become empty). If no solution is found in one run, the algorithm re-initializes. Like many approximation algorithms for graph partitioning, the algorithm does not guarantee a solution will be found in \emph{every} iteration, but generates many feasible solutions in practice due to its efficient runtime (see \S\ref{subsec:slicing-divergence}) and randomization. Further, unlike many approximations, such as LP relaxations, our algorithm guarantees that solutions discovered truly meet fault-tolerance objectives.

\subsection{Evaluation and impact on divergence}
\label{subsec:slicing-divergence}
We generate 100 slicing configurations for \meta{}/GEANT and 50 for ATT with our algorithm, and select the most weight-balanced configuration for the evaluations in \S\ref{sec:eval-divergence}. We generate $k\approx\sqrt{n}$ slices with roughly $\sqrt{n}$ nodes each. For \meta{}, we run experiments to assess the impact of slicing on divergence.

\begin{figure}[b]
    \centering
    \includegraphics[width=0.6\linewidth]{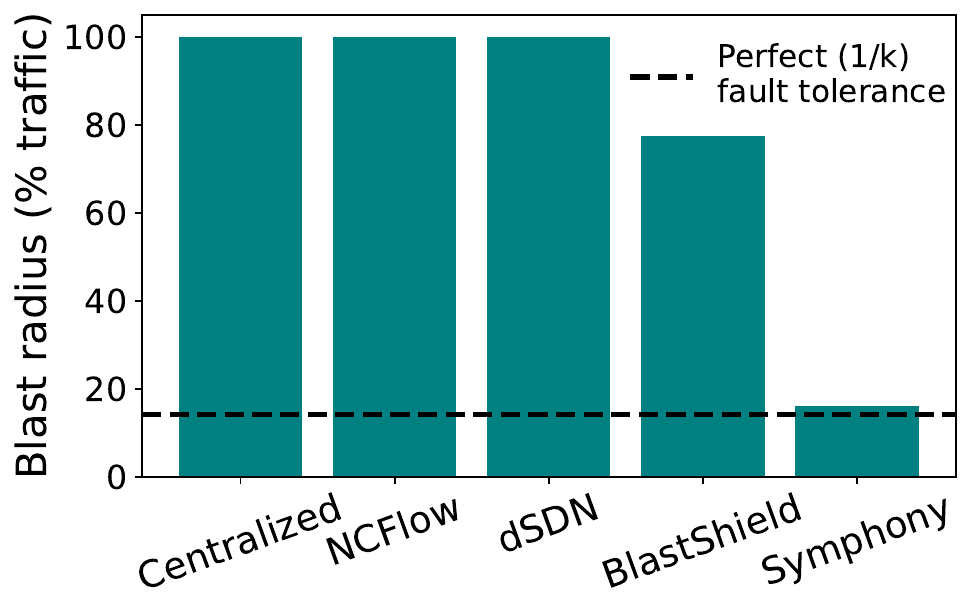}
    \caption{Blast radius for different WAN architectures.}
    \label{fig:blast-radius}
\end{figure}

\begin{figure*}[t]
    \centering
    \begin{subfigure}[b]{0.3\textwidth}
      \includegraphics[width=\textwidth]{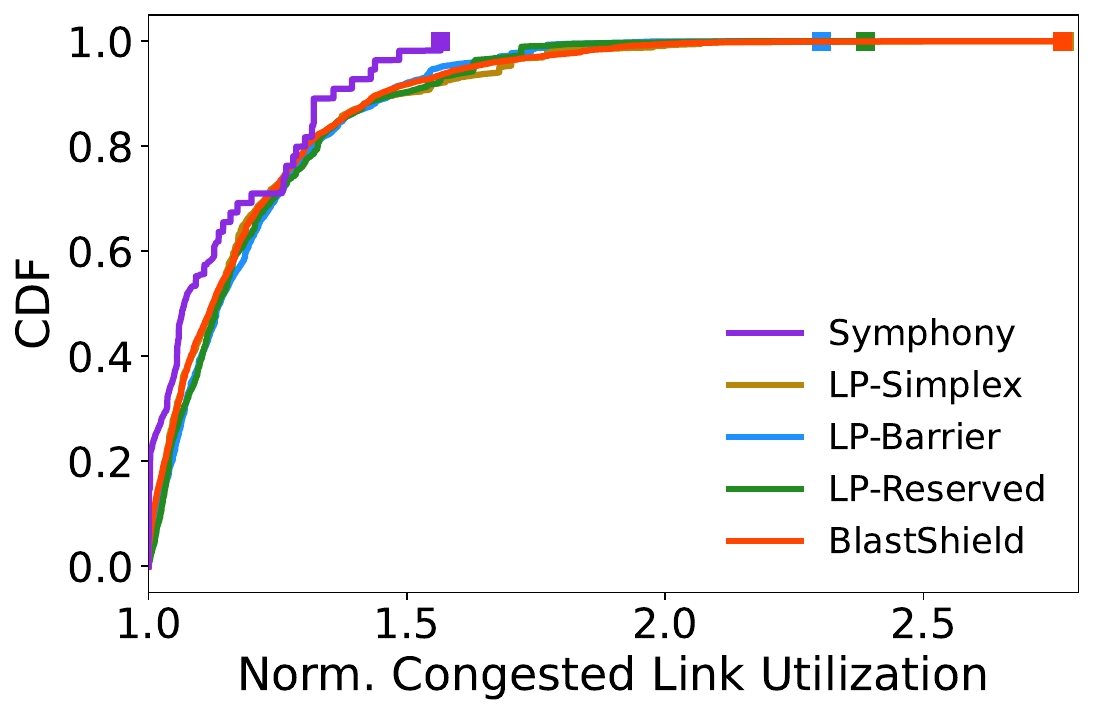}
     \caption{\small{Balanced-slicing congestion.}}
     \label{fig:balanced-slice-cdfs}
  \end{subfigure}
    \begin{subfigure}[b]{0.325\textwidth}
      \includegraphics[width=\textwidth]{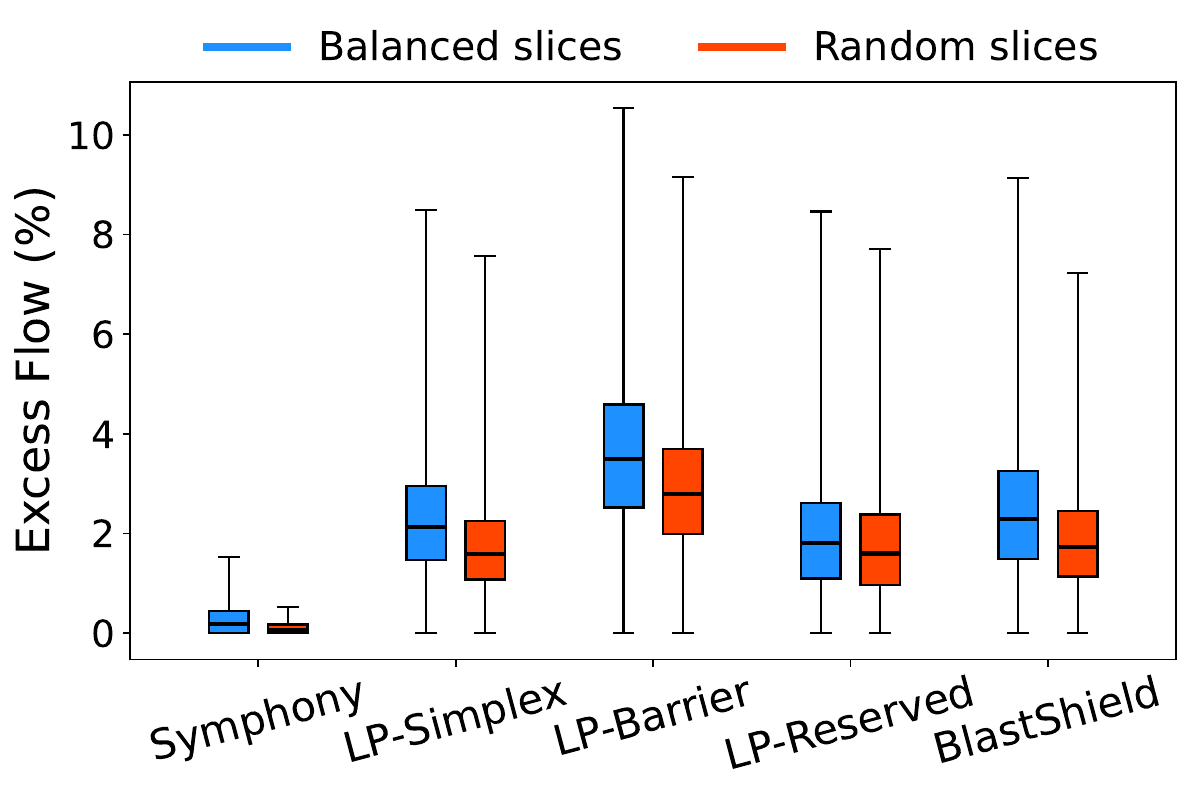}
     \caption{\small{Balanced vs.~random slicing.}}
     \label{fig:balanced-vs-random}
    \end{subfigure}
    \begin{subfigure}[b]{0.32\textwidth}
        \includegraphics[width=\textwidth]{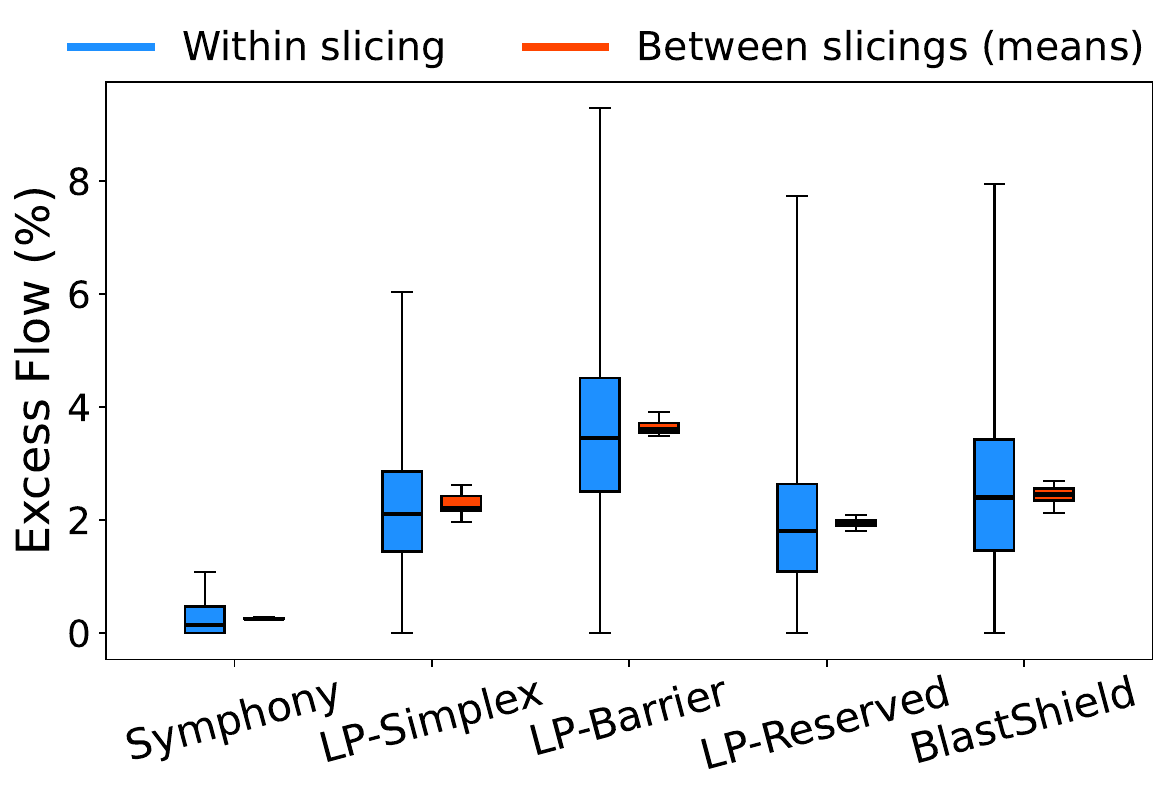}
       \caption{\small{Slicing impact on divergence.}}
       \label{fig:intra-slicing}
      \end{subfigure}
   \vspace{3mm}
    \caption{\ref{fig:balanced-slice-cdfs} plots congested link utilizations over all permutations, for all slicing configurations generated by our algorithm. \ref{fig:balanced-vs-random} compares congestion between weight-balanced slices (via our algorithm) and randomly generated slices only meeting size and connectivity constraints.~\ref{fig:intra-slicing} compares the distribution in excess flow over permutations of perturbed demand matrices for a single slicing configuration (\emph{within}) with the distribution of average excess flow (after averaging all permutations) across configurations (\emph{between}). In all boxplots, whiskers denote the maximum and minimum values observed.}
     \label{fig:slicing}
     \vspace{2mm}
\end{figure*}

\myparab{Methodology.} We specify $k=7$ slices (4 with 7 nodes and 3 with 8 nodes), $\epsilon=0.2$ tolerance, and use mean demands over 6 months to calculate node weights for \meta{}. We generate $k$ slice demands by perturbing the  demand matrix according to the empirical demand deviation distribution. We fix these demand matrices for all experiments to isolate the impact of the slicing scheme. We then generate 100 slicing configurations using Algorithm~\ref{alg:balanced_partitions}. For each candidate slicing configuration, we evaluate the divergence-induced congestion on the MT objective using the precomputed per-slice demand matrices. To isolate the impact of the slicing configuration from the \emph{exact matching} of perturbed demand matrices to slices, we evaluate the divergence across all $k!$ permutations of slice demand matrices. Thus, our experiments involve 100 slicing configurations and 5040 iterations per slicing configuration. 

\myparab{Runtime.} The algorithm generates 100 \emph{unique} slicing candidates meeting all constraints in under 12 minutes on CPU. We note that this algorithm would be run offline infrequently, as operators are not regularly re-slicing their WANs. While scaling to much larger WANs poses a challenge, operators can manually pre-slice their WAN into just a few  ``meta-slices'' and then run the algorithm within each meta-slice.

\myparab{Near-perfect fault tolerance.} Figure~\ref{fig:blast-radius} compares the blast radius of slices generated by our algorithm
with today's WAN architectures. WANs that rely on a central demand predictor, such as centralized controllers in  
NCFlow~\cite{ncflow} and dSDN~\cite{dsdn}, expose \emph{all} traffic to faults. \blastshield{} 
decentralizes demand prediction, but traffic is impacted by controller faults in any slices it transits through. 
In contrast, our algorithm enforces fault tolerance, and finds a solution that limits the blast radius to 15\% of traffic (79\% lower than \blastshield), 
nearly indistinguishable from a theoretically perfect blast radius of 14\% (1/7) if every slice had exactly 
equal weight (which may be infeasible in practice due to topology and node weights).

\myparab{Reduced congestion regardless of slicing.} Figure~\ref{fig:balanced-slice-cdfs} shows---across all slicing configurations generated by our algorithm and all permutations---\sysnamete{} best reduces oversubscription due to divergence in congested links. \sysnamete{} reduces the MLU by 60\% compared to the best baseline.

\myparab{Balanced slices increase divergence.} Using the same permutation experiments, we compare weight-balanced slices (from our algorithm) with 100 randomly generated slicing configurations that ignore 
node weights and simply meet size and connectivity constraints (see \S\ref{appendix:rand-slice} for more results). Figure~\ref{fig:balanced-vs-random} shows that regardless of TE method, balanced (\ie low blast-radius) slices  
incur greater divergence than random slices, highlighting a core tradeoff between stability and fault tolerance. Concentrating the amount of flow that 
originates in any one slice allows more traffic to be engineered using the \emph{same source of truth}, thereby reducing divergence, while also exposing more traffic 
to disruption from a controller fault. As \sysnamete{} reduces divergence across all slicing configurations, whether generated randomly or with our algorithm, we focus solely on 
fault tolerance for slicing.

\myparab{Divergence is driven by noisy demands, not slicing.} We compare, via the permutation experiments, whether exact demand values (which matter to TE) or slicing configuration contributes more to divergence. Figure~\ref{fig:intra-slicing} shows that the variation in excess flow across different permutations for a \emph{single} slicing configuration far exceeds the variation in mean excess flow (over all permutations) across \emph{different} configurations; this suggests divergence is mainly driven by noisy demands, not slicing. Thus, the exact choice of slicing configuration is insignificant for divergence-induced congestion, a problem that \sysnamete{} can handle well.
\section{Related Work}
\label{sec:related}

\myparab{Stability in TE.} WAN TE systems have been optimized for uncertainty~\cite{teavar, ffc, smore-nsdi, miao2025prete}. Early work either operated obliviously~\cite{obliviouste} or optimized for worst-case demands~\cite{cope}, but both face performance and scaling challenges. More recent approaches leverage deep learning to enhance TE stability~\cite{dote, figret, redte, alqiam2025hattrick}, but they require significant changes to existing TE infrastructure and depend on extensive training data for robustness. In contrast, \sysname{} integrates seamlessly into solver-based systems, requires no training, and guarantees stability even under distribution shifts.

\myparab{Decentralized TE architectures.}
Our work builds on a decentralized TE architecture deployed in a large cloud provider's WAN~\cite{blastshield}. Alternative architectures have been proposed~\cite{dsdn,meta-ebb}. For example, another design defines slices as subgraphs containing all nodes but only a subset of links~\cite{meta-ebb,meta-backbone}. 
As this system is susceptible to demand perturbations and controller divergence, our techniques still apply. In dSDN~\cite{dsdn}, each router uses a central demand source for flow allocations. While this eliminates divergence, it introduces a global blast radius when faults or errors occur in controller input services.

\myparab{Scaling TE.}
Recent work has developed techniques to reduce the size of the TE problem on large networks~\cite{ncflow,demand-pinning,soroush}. 
NCFlow~\cite{ncflow, pop-sosp} creates clusters of nodes and solves a smaller LP for each cluster. It then recombines solutions 
from the smaller LPs to compute global flow allocations. However, recombining LPs is error-prone, and the solution can be far from 
optimal~\cite{ncflow}. We note that these methods are unlike the decentralized TE systems deployed in practice~\cite{onewan,meta-ebb}, which 
solve the global TE problem within each slice. Researchers have also proposed alternative ways to scale TE in centralized 
WANs~\cite{teal, miao2024megate}, a problem that is orthogonal to our work.
\section*{Acknowledgements}
\noindent AD is supported by the NSF Graduate Research Fellowship.
\label{endOfBody}

\bibliographystyle{plain}
\bibliography{reference}
\clearpage
\appendix
\appendix
\label{sec:appendix}
\begin{figure*}[t]
    \centering
    \begin{subfigure}[b]{0.3\textwidth}
      \includegraphics[width=\textwidth]{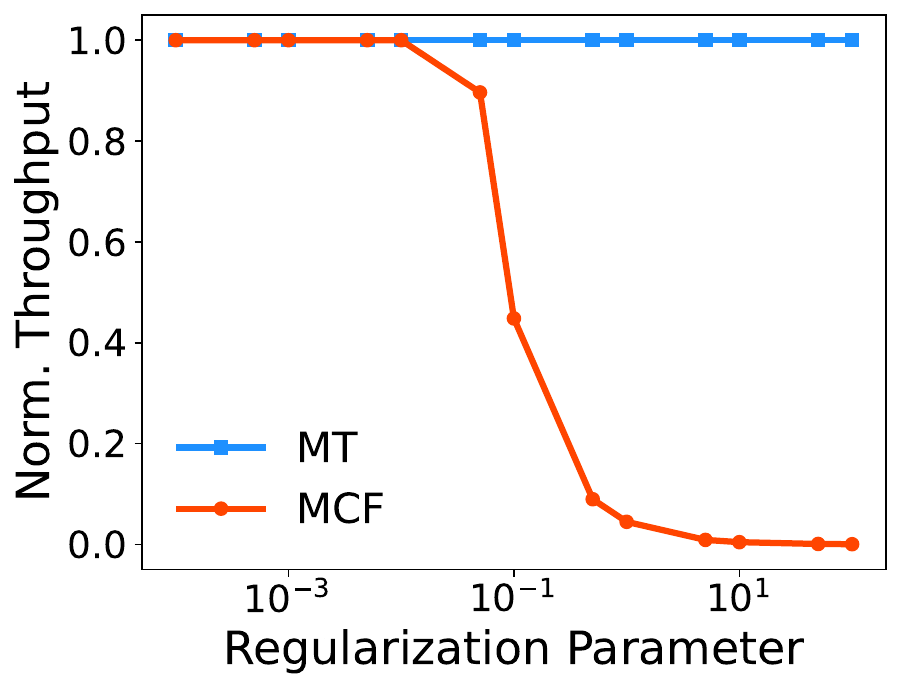}
     \caption{\small{Throughput.}}
     \label{fig:reg-param-throughput}
  \end{subfigure}
    \begin{subfigure}[b]{0.3\textwidth}
      \includegraphics[width=\textwidth]{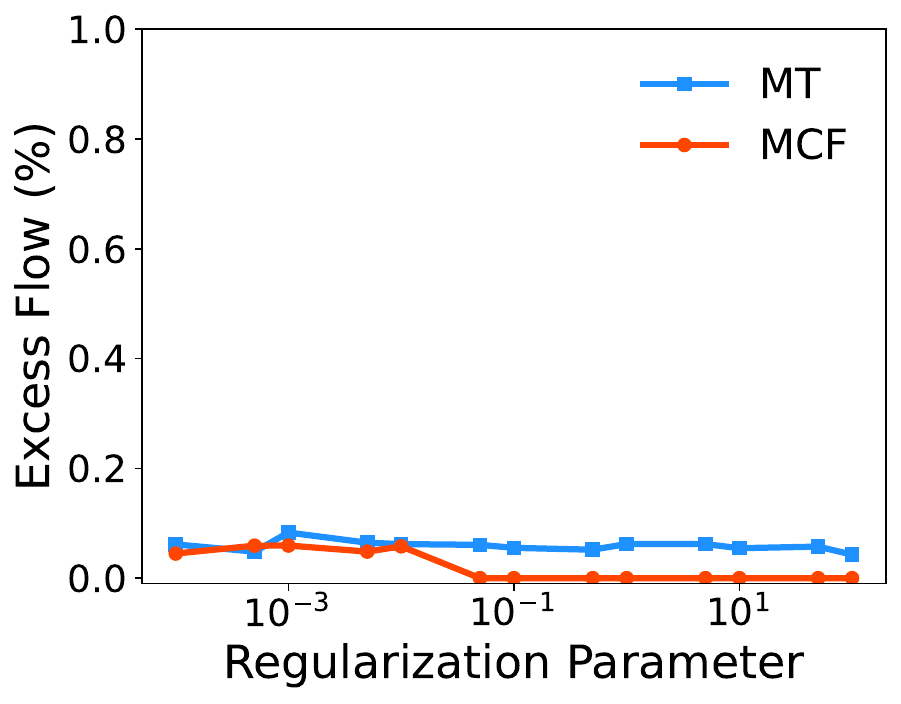}
     \caption{\small{Excess flow.}}
     \label{fig:reg-param-excess-flow}
    \end{subfigure}
    \begin{subfigure}[b]{0.31\textwidth}
        \includegraphics[width=\textwidth]{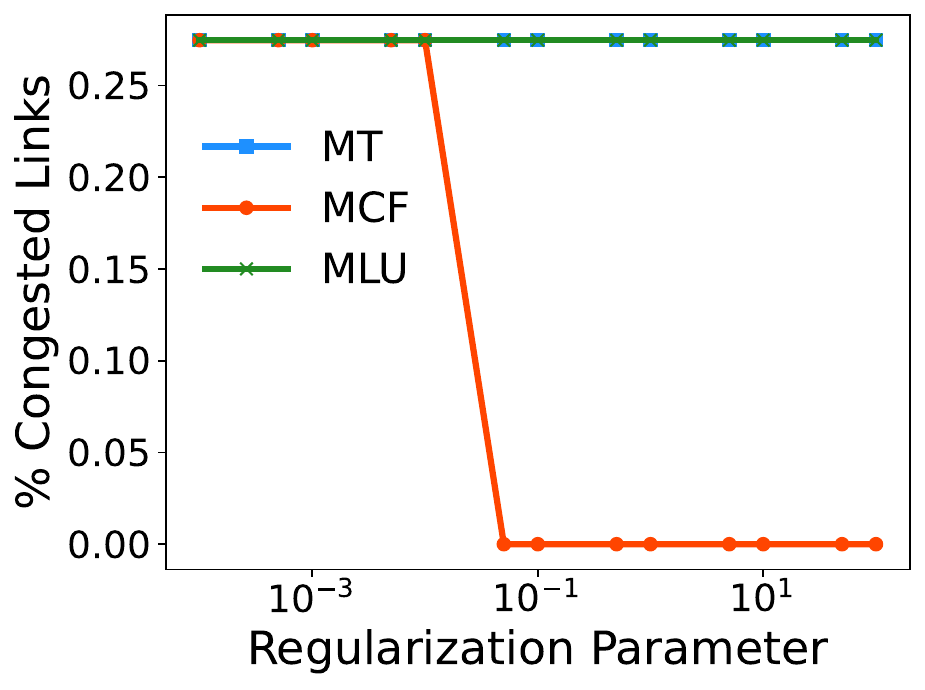}
       \caption{\small{Congestion incidence.}}
       \label{fig:reg-param-congestion}
      \end{subfigure}
   \vspace{3mm}
    \caption{Experiments varying the regularization parameter on a log scale. Throughput values in~\ref{fig:reg-param-throughput} are normalized by the maximum throughput seen over all values of $\lambda$ for each objective.}
     \label{fig:reg-param}
     \vspace{2mm}
\end{figure*}
\section{Minimizing MLU formulation}
\label{subsec:mlu-formulation}
We provide \sysname{'s} formulation for the MMLU TE objective (Algorithm~\ref{alg:min-mlu}) and discuss why even the standard LP formulation always sends full demand, even if this exceeds link capacities~\cite{smore-nsdi}. Because the objective is to minimize $Z$, without an \emph{equality} demand constraint that requires every flow to send its full request demand, the naive solution always ensures $Z$ is minimized by simply sending no traffic, \ie $w_{p} = 0$ for all paths and demands. Note that changing the demand constraint from an equality constraint to the standard inequality, $\sum_{p \in P_{i}} w_{p} \leq 1,  \forall d_{i} \in D$, would enable this default, zero-throughput solution to occur. 

If the formulation requires all demands to be fully satisfied, then it is possible that flow will exceed some link capacities without a capacity constraint. However, by enforcing capacity constraints for each link, it is possible, and very likely, that the solver will be unable to find \emph{any} feasible solution because it is impossible to satisfy all current demands with the capacity of the network. Thus, operators either use heuristics to post-process allocations computed by the MMLU objective to comply with link capacities or simply use the direct solver outputs, which are directly computed from an objective that aims to minimize congestion while sustaining full demand. In undersubscribed networks (with reasonably diverse paths), this is not a problem because MMLU can find solutions with $Z < 1$.

\begin{algorithm}[h]
    \begin{flushleft}
        \textbf{Inputs:}
    \end{flushleft}
    \begin{tabular}{p{2cm}p{6cm}}
        $G\langle V,E\rangle$ & network $G$ with vertices $V$ and links $E$\\
        $d_{i} \in D$ & traffic demand of source-destination pair $i$\\
        $P_{i}$ & set of paths for flow $i$\\
        $c_{e}$ & capacity of link $e$\\
        $I(p,e)$ & indicator variable for path $p$ using link $e$\\
        $\lambda$ & regularization parameter\\
    \end{tabular}
    \begin{flushleft}
        \textbf{Auxiliary Variables:}
    \end{flushleft}
    \vspace{-0.5em}
    \begin{tabular}{p{2cm}p{6cm}}
        $u_{e}$ & utilization of link $e$ \\
        $Z$ & maximum link utilization (MLU) \\
    \end{tabular}
    \begin{flushleft}
	    \textbf{Output:}
    \end{flushleft}
    \vspace{-0.5em}
    \begin{tabular}{p{2cm}p{6cm}}
        $w_{p}$ & flow splitting ratio along path $p$\\
     \end{tabular}
    \begin{flushleft}
        \textbf{Minimize} $Z + \lambda \sum_{e \in E} u_{e}^{2}$ \\[0.1em]
        \emph{subject to:}
    \end{flushleft}
    \begin{tabular}{ll}
        $\sum_{p \in P_{i}} w_{p} = 1$ & $\forall d_{i} \in D$\\[0.5em]
        $Z \geq u_{e}$ & $\forall e \in E$\\[0.5em]
        $u_{e} = \frac{\sum_{i} d_{i} \sum_{p \in P_{i}} w_{p} I(p,e)}{c_{e}} $ & $\forall e \in E$\\[0.5em]
        $w_{p} \geq 0$ & $\forall p \in P_{i}, \forall d_{i} \in D$\\[0.5em]
    \end{tabular}
\caption{\sysname{} MLU Minimization}
\vspace{-0.6em}
\label{alg:min-mlu}
\end{algorithm}

\section{Choice of regularization parameter}
\label{subsec:appendix-reg-param}
We examine how the choice of the regularization parameter, $\lambda$, impacts both throughput and congestion for all three objectives. We first perturb the average demand matrix $k$ times for the \meta{} network and use the same slicing strategy as in \S\ref{sec:eval-divergence}. We then run \sysnamete{} for each objective using different values of $\lambda$, ranging from $1\times10^{-4}$ to $100$. The results are shown in Figure~\ref{fig:reg-param}. We find that \sysnamete{} is very robust in ensuring low congestion regardless of the choice of $\lambda$: none of the three objectives ever congest more than 0.26\% of links, and MT and MCF consistently sustain only about 0.1\% excess flow. However, we note that the regularization parameter still matters for performance. Choosing a value of $\lambda$ that is too high can degrade performance of the original objective, as shown with maximum concurrent flow in Figures~\ref{fig:reg-param-throughput} and \ref{fig:reg-param-congestion}. Since MCF's objective value is always within the $[0,1]$ range, choosing too high a value of $\lambda$ (like $100$), can cause it to neglect the original objective and simply sustain less flow in favor of reduced congestion. Nonetheless, the choice of $\lambda$ is still insignificant within several orders of magnitude, and our evaluations across multiple WANs and thousands of demand matrices showed strong performance while fixing the value of $\lambda$ for each objective. That is, it is sufficient to simply choose a single value of $\lambda$ for each objective, regardless of the topology or demand matrix, and never modify its value again.

\section{Effective throughput on KDL}
For KDL, we also compare \sysname{}, \sysname{-Opt}, and the standard LP on the effective throughput sustained, \ie throughput minus total divergence-induced congestion (sum of all link over-allocations) and then normalized by the throughput of the \oracle{} (which cannot experience divergence-induced congestion). As shown in Figure~\ref{fig:kdl-flow}, \sysname{} sustains >99.5\% of the \oracle{'s} throughput in all cases. The equivalent performance of \sysname{} and \sysname{-Opt}  indicates that runtime optimizations do not impact performance.
\begin{figure}[H]
\centering
\includegraphics[width=0.65\linewidth]{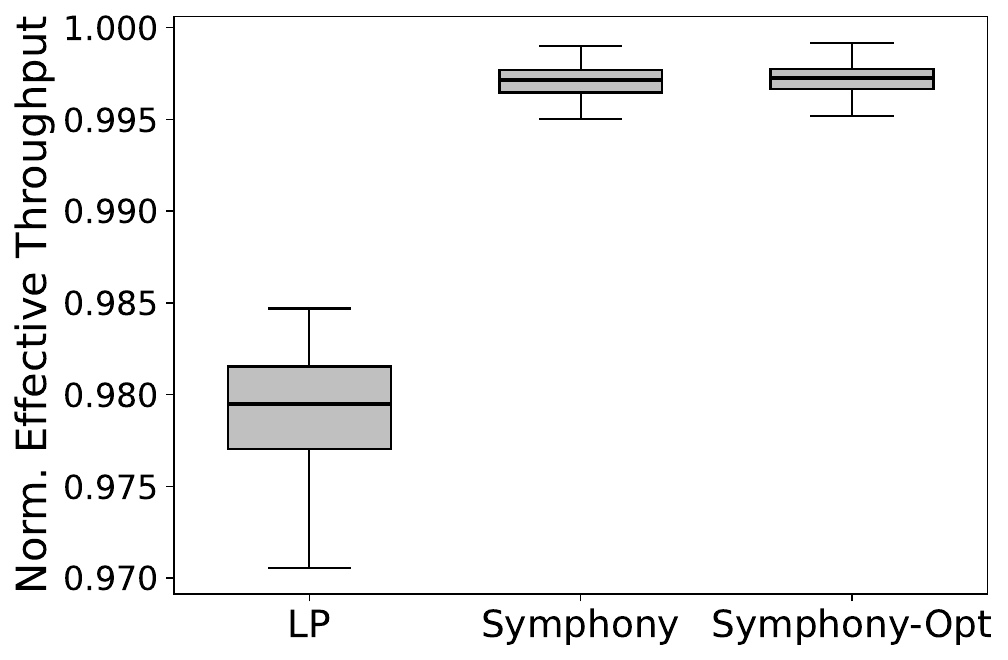}
\caption{KDL effective throughput.}
\label{fig:kdl-flow}
\end{figure}

\section{Extended \dote evaluation}
\label{subsec:dote-appendix}
We compared \sysname{} to \dote{} in two experiments: (1) average conditions, \ie randomly sampling demand matrices, and (2) worst-case conditions, \ie sampling demand matrices that are the \emph{most different}, in terms of Euclidean distance, from the average demand matrix seen while training \dote{}. We show the results from evaluating against \dote{} in average conditions for the MMLU objective in Figure~\ref{fig:min-mlu-dote}. We note that while \dote{} performs similarly to \sysname{} in terms of the distribution of the utilizations of congested links in average conditions, \dote{} underperforms \sysname{} in ``worst-case'' conditions, where inference demand matrices deviate from the training dataset. This is especially critical for decentralized TE because operators want TE objectives that are stable even in the presence of large deviations in input demands, or sudden network-wide fluctuations in predicted demands. We note that there is no similar notion of the worst-case conditions for \sysname{} because it is completely agnostic to historical demands and its flow allocations solely depend on the current demand matrix. We also note that \dote{} still requires solving LPs with a solver like Gurobi, as every input demand matrix used during training must be labeled with the optimal value of the LP solution. \sysname{} outperforms \dote{} in reducing divergence-induced congestion for the MT objective in \emph{all scenarios} and for the MMLU objective in worst-case scenarios.
\begin{figure}[h]
\centering
    \includegraphics[width=0.65\linewidth]{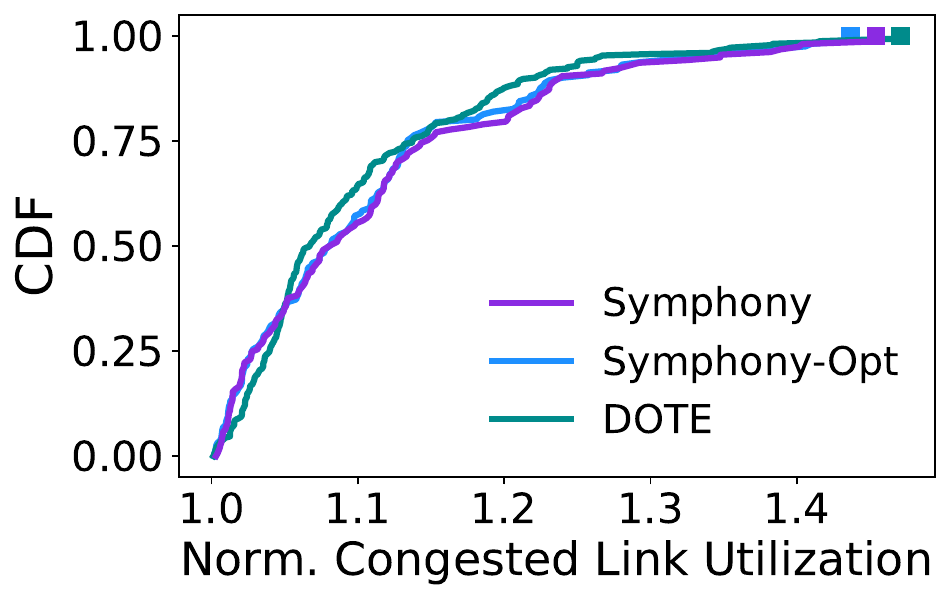}
   \caption{Comparing \sysname{} to \dote{} using a random sample of 100 demand matrices for the MMLU objective.}
   \label{fig:min-mlu-dote}
\end{figure}

\section{Randomized slicing algorithm}
\label{appendix:ilp}
Algorithm~\ref{alg:balanced_partitions} shows   
our randomized algorithm to construct slices that minimize the blast radius of 
controller faults, by aiming to load-balance egress traffic across slices (while ensuring slices are roughly equal in size).
\begin{algorithm}[t]
    \caption{\sysnameslice{}: Fault-Tolerant Partitioning}
    \label{alg:balanced_partitions}
    \begin{algorithmic}[1]
    \Require Graph $G = \langle V, E \rangle$, node weights $\phi_{v}$, number of slices $k$, desired slice sizes $s_{j}$, target per-slice total weight $T$, and tolerance $\epsilon \in [0,1]$
    \Ensure Slices $\{S_1, S_2, \dots, S_k\}$ such that each $S_j$ is connected, and $\sum_{v \in S_j} \phi_{v} \in [T \cdot (1 - \epsilon), T \cdot (1 + \epsilon)] \quad \forall j$
    \State Initialize the slices $S_j = \emptyset$, weight tracker $\theta_j = 0$, and candidates list $L_j = \emptyset$ for all $j \in \{1, \dots, k\}$
    \State Assign elephant sources to $S_j$, update $\theta_j$ with the weight, and initialize $L_j$ with neighbors of assigned nodes
    \While{any $|S_j| < s_j$}
        \For{$j = 1$ to $k$}
            \If{$|S_j| = s_j$}
                \State \textbf{Continue to next partition}
            \EndIf
            \State Prune $L_j$ to include only available nodes
            \If{$L_j = \emptyset$}
                \State \textbf{Retry or backtrack}
            \EndIf
            \State Select a node $v \in L_j$:
            \If{$\phi_v < T \cdot (1 - \epsilon)$ and $|S_j|$ is near $s_j$}
                \State Choose $v = \mathrm{arg}\max_{v' \in L_j} \phi_{v'}$ such that 
                \Statex \hspace{16mm}$\theta_j + \phi_{v'} \leq T \cdot (1 + \epsilon)$
            \Else
                \State Select a random node $v$ from $L_j$
            \EndIf
        \State Add $v$ to $S_j$, add $\phi_{v}$ to $\theta_j$, and update $L_j$ with \Statex \hspace{10mm} any unassigned neighbors of $v$
        \EndFor
    \EndWhile
    \State \Return $\{S_1, S_2, \dots, S_k\}$
    \end{algorithmic}
\end{algorithm}

\section{Evaluation methodology for blast radius}
\label{appendix:blast}
To evaluate \sysnameslice{} in Figure~\ref{fig:blast-radius}, we take the most fault-tolerant 
(\ie{} balanced weight) candidate slicing configuration generated by our algorithm and assess the worst-case percent 
of traffic that could be disrupted, using average demand distributions. (We use the slicing configuration 
generated by \sysnameslice{} because it would be infeasible to develop a new algorithm to construct slices that 
minimize blast radius using \blastshield{'s} slice routing, which explodes combinatorially in the number of 
\emph{paths} in the network, as opposed to nodes, in our case.) For \sysname{}, which 
uses source routing, this is done by first calculating the percent of total traffic that originates from nodes 
in each slice, and then taking the maximum value as the blast radius. For \blastshield{}, this 
is far more complicated, because the percent of traffic affected by a fault depends 
on the exact path through the network, which is determined by TE. Thus, to calculate the blast 
radius for \blastshield{}, we (1) for each slice, identify the set of flows associated with any 
paths that pass through the slice, (2) sum the average demand values for the flows identified 
from the first step, and (3) identify the slice with the highest total expected flow according to the 
previous step, and divide its total expected flow by the average total demand. This procedure 
captures how much traffic could potentially be affected by a controller fault using \blastshield{'s}
custom slice routing.

\section{\sysnamete{} performance on random slices}
\label{appendix:rand-slice}
We run a similar experiment as that shown in Figure~\ref{fig:balanced-slice-cdfs}, but 
instead using slicing configurations generated randomly, without using \sysnameslice{}.
The randomly generated slices still meet the same (roughly equal) size and connectivity constraints, but they do not 
aim to be fault tolerant, by balancing the weights of slices (in terms of total traffic sent 
by nodes in the slice). Figure~\ref{fig:random-slices} shows that \sysnamete{} outperforms 
all TE baselines, across all permutations of assigned demand matrices and all 100 randomly generated 
slicing configurations.

\begin{figure}[h]
\centering
 \includegraphics[width=0.65\linewidth]{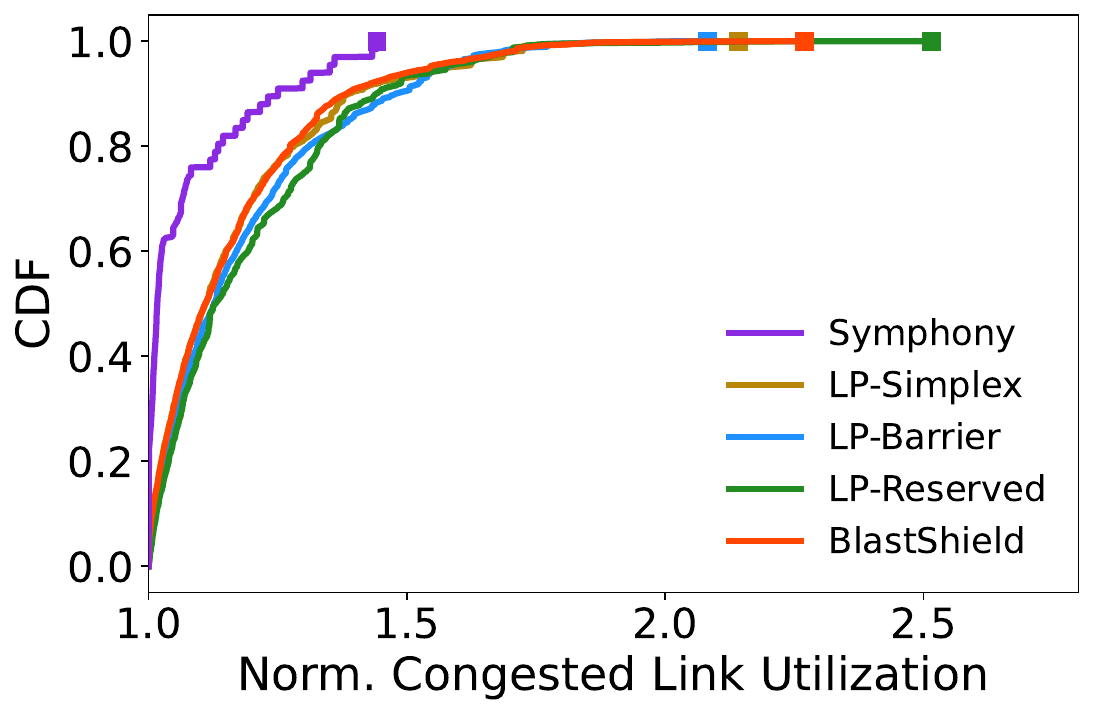}
 \caption{\sysnamete{} on randomly generated slices.}
 \label{fig:random-slices}
\end{figure}

\section{Distribution of elephant sources}
\label{appendix:elephant}
Figure~\ref{fig:heavy-hitter} illustrates the Lorenz curve of traffic generated by nodes. 
It shows that 40\% of nodes send no traffic while 20\% of nodes send over 50\% of the traffic. 
\emph{Elephant sources} dominate traffic sent over the WAN. We note that the distribution of 
traffic \emph{sent} by nodes is even more skewed than the well-documented skewed distribution in 
demands, since this definition aggregates demands by source node, amplifying the effects of 
top contributors. We use this property, which has been well documented in WAN TE, to warm-start \sysnameslice{} 
and extensively prune the search space of insufficient solutions. Including two elephant sources 
in the same slice would result in an infeasible solution, given most values of $\epsilon$ and $T$,
for \sysnameslice{} because it would immediately ``over-weight'' the slice and prevent other 
slices from achieving sufficient weight later on. By initializing slices with elephant sources, \sysnameslice{}
starts off with significant weight in every slice and proactively removes infeasible possibilities.

\begin{figure}[h]
\centering
    \includegraphics[width=0.63\linewidth]{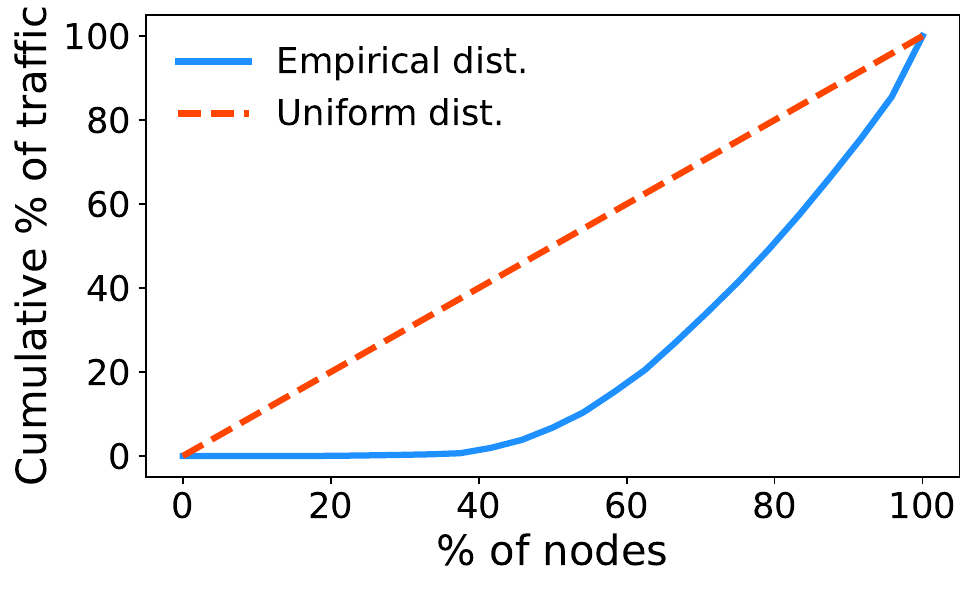}
    \caption{Elephant sources dominate mean traffic generated.}
    \label{fig:heavy-hitter}
   \end{figure}

\section{Congestion with \sysnamete{}}
\label{sec:appendix-congestion}
For the MMLU objective, the realized MLU after decentralization measures the impact of divergence.
Figure~\ref{fig:effective-mlu} shows the MLUs of different baselines, normalized by the MLU of the \oracle{} in each iteration.
\sysname{} improves median and worst-case MLUs by 21\% and 33\%, respectively for the MMLU objective.
We also show additional results for congestion across topologies, path selection strategies, and objectives in 
Figures~\ref{fig:congested-cdfs-geant}-\ref{fig:num-congested-geant}.
\begin{figure*}[h]
    \centering   
    \includegraphics[width=0.75\textwidth]{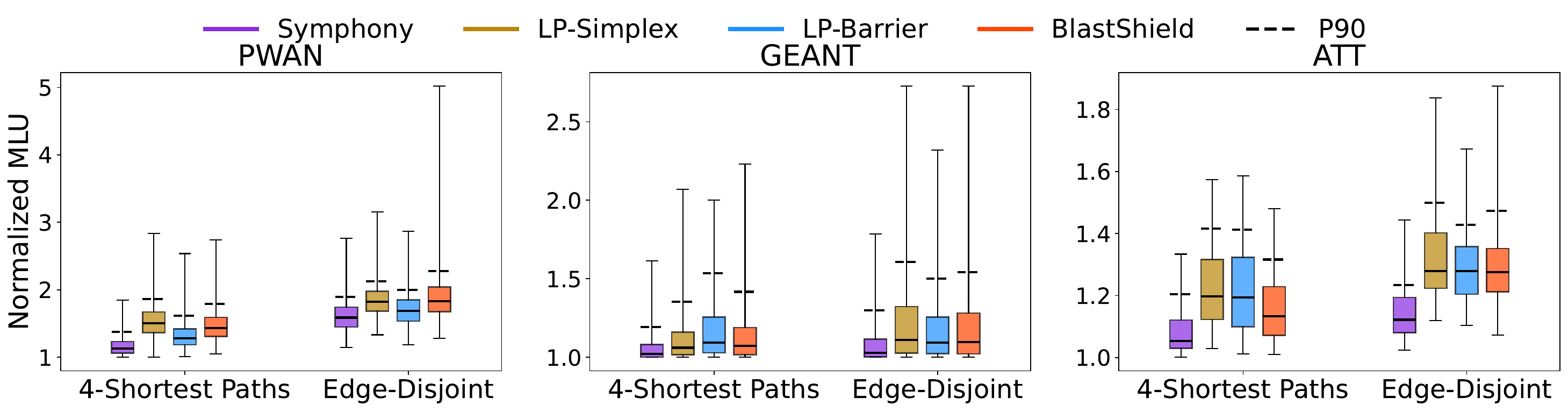}
   \vspace{2mm}
    \caption{The realized MLU, normalized by \oracle{'s} MLU value, for the minimum MLU objective.}
     \label{fig:effective-mlu}
\end{figure*}
\begin{figure*}[h]
    \centering   
    \includegraphics[width=0.75\textwidth]{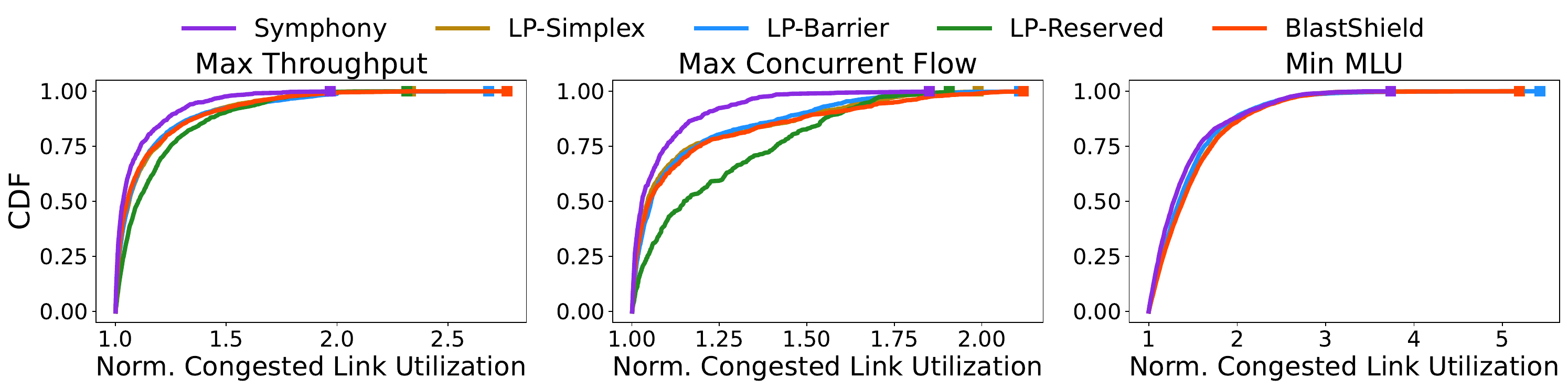}
   \vspace{2mm}
    \caption{CDFs of link utilizations for congested links, across all iterations, for GEANT with 4-shortest paths.}
     \label{fig:congested-cdfs-geant}
\end{figure*}
\begin{figure*}[h]
    \centering   
    \includegraphics[width=0.75\textwidth]{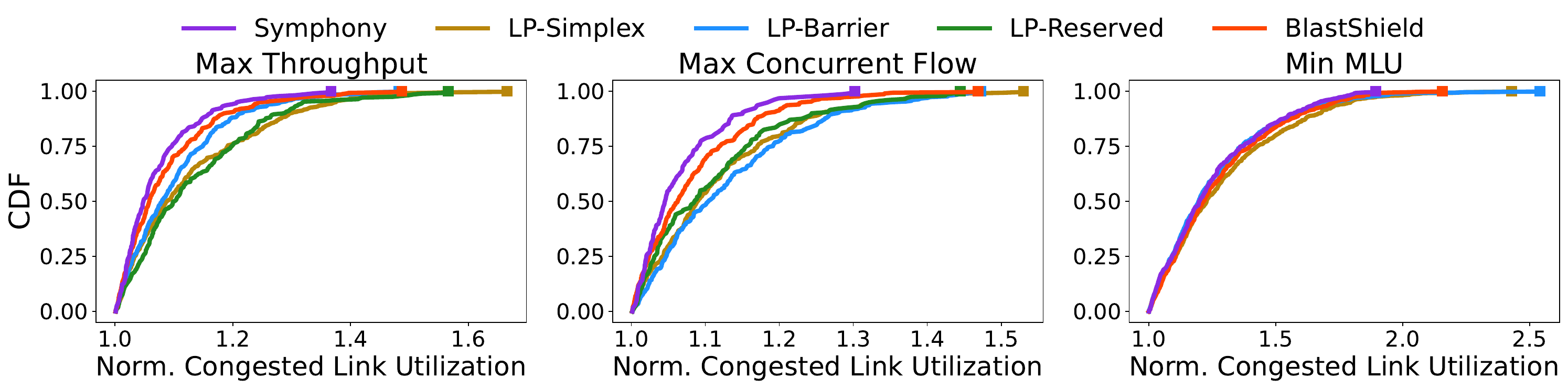}
   \vspace{2mm}
    \caption{CDFs of link utilizations for congested links, across all iterations, for ATT with 4-shortest paths.}
     \label{fig:congested-cdfs-att}
\end{figure*}
\begin{figure*}[h]
  \centering   
  \includegraphics[width=0.75\textwidth]{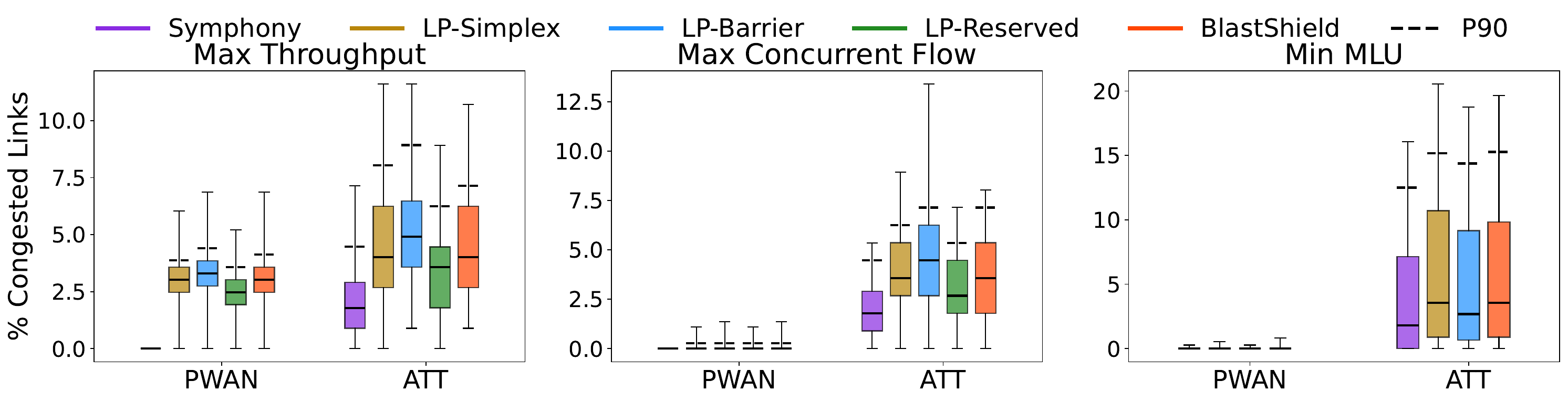}
 \vspace{2mm}
  \caption{The percent of links that are congested in each iteration, using edge-disjoint paths.}
   \label{fig:num-congested-ed}
\end{figure*}
\begin{figure*}[h]
    \centering   
    \includegraphics[width=0.75\textwidth]{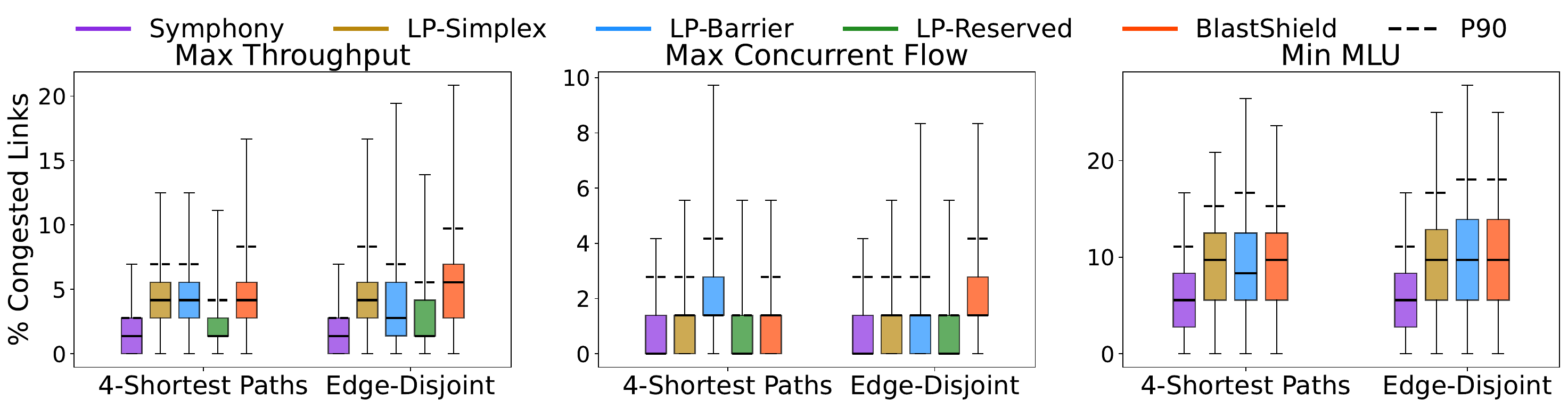}
   \vspace{2mm}
    \caption{The percent of links that are congested in each iteration for GEANT.}
     \label{fig:num-congested-geant}
  \end{figure*}
\end{document}